\documentclass[10pt,twocolumn,twoside]{IEEEtran}
\usepackage{amsmath,amsfonts}
\usepackage{algorithmic}
\usepackage{array}
\usepackage{textcomp}
\usepackage{stfloats}
\usepackage{url}
\usepackage{verbatim}
\usepackage{graphicx}
\usepackage{balance}
\usepackage{xcolor}
\usepackage{subfig}
\usepackage{orcidlink}
\usepackage{booktabs} 
\usepackage{hyperref}
\hypersetup{hidelinks,
	colorlinks=true,    
	linkcolor=blue,     
	anchorcolor=blue,   
	citecolor=blue,     
	urlcolor=blue,       
	pdfstartview=Fit,
	breaklinks=true
}
\usepackage{cite}
\usepackage{amsmath,amssymb,amsfonts,amsthm}
\usepackage{algorithm}
\usepackage{flushend}
\newtheorem{theorem}{Theorem}
\newtheorem{lemma}{Lemma}
\newtheorem{ass}{Assumption}

\begin{document}
	\title{On Optimal Event-Triggered Distributed Control for Stochastic Multi-Agent Systems via Reinforcement Learning}
	
	\author{Ziming Wang\orcidlink{0000-0001-7000-9578}\textsuperscript{*}  \IEEEmembership{Student Member,~IEEE}, Bingbing Li\orcidlink{0009-0002-0787-5654}, Karl H. Johansson\orcidlink{0000-0001-9940-5929}  \IEEEmembership{Fellow,~IEEE} \\and Apostolos I. Rikos\orcidlink{0000-0002-8737-1984}  \IEEEmembership{Member,~IEEE}


           \thanks{* Corresponding author: Ziming Wang}
\thanks{Ziming Wang and Apostolos I. Rikos are with the Artificial Intelligence Thrust, The Hong Kong University of Science and Technology (Guangzhou), Guangzhou, China. Apostolos I. Rikos is also affiliated with the Department of Computer Science and Engineering, The Hong Kong University of Science and Technology, Clear Water Bay, Hong Kong. E-mails:{\tt~zwang216@connect.hkust-gz.edu.cn}, {\tt~apostolosr@hkust-gz.edu.cn}.}
\thanks{Bingbing Li is with the School of Economics \& Management, South China Normal University, Guangzhou, China. E-mail:{\tt~15320760579@163.com}.}
\thanks{Karl H. Johansson is with the Division of Decision and Control Systems, KTH Royal Institute of Technology, SE-100 44 Stockholm, Sweden. E-mail:{\tt~kallej@kth.se}.}}

	\maketitle
    
	\begin{abstract}
We propose a reinforcement learning (RL) based optimal distributed control algorithm for the multi-agent systems (MASs) with stochastic uncertainties. Unlike existing methods, during the optimized backstepping design process, we use the actor-critic-identifier structure. The actor neural network is used to reflect control behavior, the critic neural network works to evaluate control performance and the unknown stochastic uncertainties are handled by identifier neural network. 
Furthermore, a low-pass filter effectively suppresses problems stemming from non-affine nonlinear faults and a hybrid event-triggered control (ETC) strategy is proposed to reduce control frequency.
We analyze our algorithm’s operation, and we provide a Lyapunov-based stability proof that guarantees all errors are bounded, ensuring precise tracking between the leader and followers. We validate its correctness in a single-axis robotic manipulator simulation and finally, we compare against the non-optimal control algorithm highlighting our optimal control algorithm’s operational advantages.
	\end{abstract}
	
	\begin{IEEEkeywords}
		Optimal control, distributed control, event-triggered control, neural networks, stochastic multi-agent systems, reinforcement learning.
	\end{IEEEkeywords}

	\section{Introduction}
    \label{sec:introduction}
The relentless surge of intelligent systems has led to a growing need for effective control and coordination within stochastic multi-agent systems (MASs)~\cite{intro1,intro2,intro3}. A fundamental challenge in distributed coordination is leader-following consensus tracking problem~\cite{intro4,intro5}, which is to develop distributed control algorithms that can be used by a group of agents in order to reach agreement to a desired trajectory. The concept of optimal distributed control extends this objective to simultaneously achieve consensus while minimizing a cost function related to system performance. Knowledge of the optimal distributed control for stochastic MASs is of importance for various applications such as multi-robot systems~\cite{intro6}, smart grids~\cite{intro7}, unmanned vehicular systems~\cite{intro8}, and sensor networks~\cite{intro9}.


\subsection{Related Work}
The problem of stochastic MASs has received significant attention in the literature. A stochastic MAS is composed of a group of autonomous dynamic agents, which communicate with each other under a graph-based communication network. Among several studies on stochastic MASs~\cite{rela1,rela2,rela3}, consensus tracking control is one of the focal points. In general, two types of consensus problems have been studied, including leaderless consensus~\cite{rela1} and leader-following consensus~\cite{rela2,rela3}. It is worth mentioning that several high-order consensus control methods for the nonlinear stochastic MASs are developed in the recent works~\cite{rela4,rela5,rela6,rela7,jiajia1}. 
The work in \cite{rela4} introduced the idea of distributed logic into stochastic MASs. 
In~\cite{rela5}, a distributed consensus control method is developed for second-order stochastic MASs with external disturbances. 
In \cite{rela6} the authors consider the simultaneous presence of time delays and external disturbances, and achieves leader-following input-to-state mean-square consensus. 
Furthermore, \cite{rela7} applies the distributed stochastic control algorithm into a heterogeneous unmanned aerial vehicle swarm system. 

For decades, optimization has been a fundamental principle in modern control theory, originating from Bellman’s dynamic programming~\cite{bellman}. 
Most existing optimal control methods rely on solving the hamilton-jacobi-bellman (HJB) equation~\cite{rela8} or the hamiltonian equation~\cite{rela9}, although the HJB equation is generally difficult to solve due to its nonlinear nature. 
To overcome this difficulty, reinforcement learning (RL) based function approximation strategy~\cite{rela10} has been widely considered. 
Among other works, the actor-critic architecture is one of the most popular approaches for implementing RL algorithms, such as~\cite{rela11,rela12} for continuous systems and ~\cite{rela13,rela14} for discrete systems. 
In this architecture, the actor takes actions, and the critic evaluates them to improve future performance. 
However, most RL-based optimal control methods require full system dynamics, which is often impractical or infeasible. 
To relax this requirement, the actor-critic-identifier method was proposed in~\cite{rela15} to estimate unknown dynamics. 
Therefore, many recent studies focus on the actor-critic-identifier structure, as reported in~\cite{rela16, rela18}. Specifically, in~\cite{rela16}, they proposed an optimized backstepping control strategy to handle the dynamics in a nonlinear system, while~\cite{rela18} extended this approach to the MASs.


Early automatic control studies mainly employed continuous-time control strategies \cite{backstepping}, which often caused unnecessary resource consumption. 
This motivated the development of event-triggered control (ETC) strategies, like~\cite{compare, rela19, rela20, HETC1, HETC2}. 
Specifically, in~\cite{compare} and~\cite{rela19}, the fixed-threshold based ETC strategy was proposed to consistently reduce the control update frequency and save communication resources. The  work in \cite{rela20} proposed a self-triggered ETC strategy to predict the next triggering instant. 
Furthermore, the switching concept was introduced in~\cite{HETC1,HETC2}, where the triggering condition is divided into two components to integrate the advantages of different ETC strategies.

\subsection{Motivation}
Despite extensive studies on stochastic MASs, RL-based optimal control, and event-triggered control, several limitations remain. 
Existing stochastic consensus methods mainly focus on tracking control under stochastic disturbances, with greater emphasis on designing controllers to guarantee stability, convergence speed, and tracking performance~\cite{intro9,rela2,rela3}.
However, less attention has been paid to the optimization of the control process itself, especially the optimization of each virtual controller in the backstepping procedure. 
This issue becomes more challenging when unknown nonlinear dynamics and fault-induced uncertainties coexist~\cite{ass1,ass3}.

Therefore, it is necessary to develop an RL-based optimal distributed control framework that can simultaneously handle stochastic uncertainties and fault-induced uncertainties for MASs, reduce controller update frequency through a hybrid ETC strategy, and guarantee bounded tracking with Zeno-free behavior. Additionally, most existing studies analyze ETC results mainly by measuring the reduction in control signal triggering frequency, while the potential effects on tracking accuracy and control effort are often overlooked~\cite{lemma1,lemma2}. 
Hence, in this paper, we also aim to present a new analytical method that provides a more comprehensive quantitative evaluation of ETC performance.
It is necessary to develop an RL-based optimal distributed control framework that can simultaneously handle stochastic uncertainties and fault-induced uncertainties, reduce controller update frequency through a hybrid ETC strategy, and guarantee bounded tracking with Zeno-free behavior.
In addition, most existing studies analyze ETC results mainly by measuring the reduction in control signal triggering frequency, while the potential effects on tracking accuracy and control effort are often overlooked. 
Hence, a new analytical method is needed to provide a more comprehensive quantitative evaluation of ETC performance.

\subsection{Main Contributions}
Motivated by the aforementioned limitations in the existing literature, we propose an RL-based optimal distributed control algorithm for MASs.
To the authors' knowledge, our proposed framework provides a unified treatment of stochastic dynamics, non-specific fault-induced uncertainties, and hybrid event-triggered control, which has received limited attention in the existing literature.
The developed framework combines backstepping method, RL-based actor-critic-identifier structure, neural networks (NNs) and Lyapunov-based stability theory. 
The key contributions of this paper are summarized as follows:
\begin{itemize}
    \item We introduce an actor-critic-identifier structure-based optimal distributed control algorithm that simultaneously handles the unknown stochastic uncertainties and non-specific fault-induced uncertainties within a unified backstepping-based design (Section~\ref{sec3}). 
    
    \item We prove that all closed-loop errors within MASs are semi-globally uniformly ultimately bounded (SGUUB) (Theorem~\ref{theorem}).
    We also provide a hybrid ETC strategy by combining a static triggering condition and a dynamic triggering condition to achieve a balance between ensuring tracking performance and reducing triggering frequency. We guarantee the avoidance of Zeno behavior (Theorem~\ref{theorem}). 
    
    \item We demonstrate the effectiveness and applicability of the proposed algorithm in a practical single-axis robotic manipulator simulation (Section~\ref{sec4}). 
    Additionally, we compare the optimal control algorithm’s performance against the non-optimal control algorithms to highlight its advantages in tracking convergence speed and tracking accuracy. 
    Finally, drawing on knowledge from economics we provide a novel analysis of the event-triggered statistics, offering a new perspective for the analysis of ETC-related research (Section~\ref{sec4}).
\end{itemize}

\textbf{Paper Organization.} In Section~\ref{sec2}, we provide the notation, information regarding the agent communication, the stochastic MASs model, the hybrid ETC strategy and control objective. In Section~\ref{sec3}, we illustrate the controller design process and stability analysis. In Section~\ref{sec4}, we present our optimal control algorithm on a single-axis robotic manipulator simulation to validate the effectiveness.  Section~\ref{sec5} concludes the paper.
 
\section{Preliminaries and Problem Formulation}\label{sec2}
\textbf{Notation.} Let $\mathbb{R}$ denote the set of real numbers,  $\mathbb{R}^{n\times m}$ denote the set of $n\times m$ real matrices, and $\mathbb{Z}^+$ denote the set of positive integers.
        Define $\chi$ as a general variable. $\hat{\chi}$ means the estimation of $\chi^*$. $\tilde{\chi}=\hat{\chi}-\chi^*$ denotes the estimation error. $\Theta_{min}(\chi)$ denotes the minimum eigenvalue of the positive definite matrix $\chi$. $\Theta_{max}(\chi)$ denotes the maximum eigenvalue of the positive definite matrix $\chi$. $e$ denotes the Euler's number. $||\cdot||$ denotes the Euclidean norm for vectors or the induced two-norm for matrices.

\textbf{Communication Network.} Graph theory serves as a fundamental method for modeling communication interactions in the MASs among agents including $1$ leader and $N$ followers. The topology is defined as $\mathcal{G}=(\mathcal{V},\mathcal{E})$, where $\mathcal{V}=\{v_1,v_2,...,v_N\}$ denotes the nodes and $\mathcal{E}\subseteq \mathcal{V}\times\mathcal{V}$ denotes the edges (self edges excluded). $(v_i,v_j)\in \mathcal{E}$ indicates that agent $i$ can receive information from agent $j$. we define $\mathcal{A}=[a_{i,j}]$ as the adjacency matrix of $\mathcal{G}$, where $a_{i,j}=1$ indicates information transmission from agent ${j}$ to agent ${i}$, whereas $a_{i,j}=0$ indicates no communication interactions between the two agents. Then we define the Laplacian matrix $\mathcal{L}$ as $\mathcal{L}=\mathcal{D}-\mathcal{A}$, where $\mathcal{D}=\operatorname{diag}( {{d_1},{d_2},...,{d_N}})$ is the in-degree matrix of graph $\mathcal{G}$, ${d_i} = \sum\nolimits_{j = 1,j \ne i}^N {{a_{ij}}}$. And $\mathcal{B} = \operatorname{diag}({{b_1},{b_2},...,{b_N}})$ denotes the communication weights between the follower and the leader, where ${b_i} > 0$ if the leader can transfer information to follower $i$, if not, ${b_i} = 0$ holds.

\subsection{Stochastic Multi-agent Systems Model}
    Consider the MASs with $N$ followers labeled from $1$ to $N$ and one leader. Each follower is modeled by the following stochastic nonlinear dynamics:
        \begin{equation}
		\label{equ1}
		\begin{aligned}
			dx_{i,k}&=(x_{i,k+1}+h_{i,k}(\bar{x}_{i,k}))dt+f_{i,k}(\bar{x}_{i,k})dw_i, \\
			dx_{i,n}&=(u_i+h_{i,n}(\bar{x}_{i,n})+c_i\xi_{i}(\bar{x}_{i,n},u_i))dt+f_{i,n}(\bar{x}_{i,n})dw_i,\\
            y_i&=x_{i,1},
		\end{aligned}
	    \end{equation}
    where $i=1,2,...,N$ and $k=1,2,...,n-1$. $\bar{x}_{i,k}=[x_{i,1},x_{i,2},...,x_{i,n}]^T\in\mathbb{R}^n$ denotes the state vector of the $i$-th follower. $h_{i,k}\in\mathbb{R}$ and $f_{i,k}\in\mathbb{R}$ are unknown smooth Lipschitz continuous functions. The stochastic dynamics $w_i\in\mathbb{R}$ denotes the $r$-dimensional standard Wiener process defined on a probability space $(\Omega,\mathcal{F},\{\mathcal{F}_t\}_{t\geq t_0},\mathcal{P})$ with $\Omega$ denoting a sample space, $\mathcal{F}$ denoting a $\sigma$-field, $\{\mathcal{F}_t\}_{t\geq t_0}$ denoting a filtration and $\mathcal{P}$ denoting the probability measure. $u_i\in\mathbb{R}$ and $y_i\in\mathbb{R}$ denote the input and output of the $i$-th follower. $\xi_i\in\mathbb{R}$ denotes the non-specific fault-induced uncertainties with $c_i(t-t^*)$ denoting the temporal characteristics of faults occurring at an unknown time $t^*$. Notably, $c_i(t-t^*)=0$ holds when $t<t^*$ and $c_i(t-t^*)=1-\mathrm{e}^{-\mu\left(t-T_0\right)}$ otherwise. $\mu$ denotes the evolution rate of the unknown fault.

\subsection{Hybrid Event-triggered Control Strategy}
For many real-world scenarios, it is unrealistic to have continuous transmission of control signals. Applying the ETC strategy ensures that control signals are transmitted discretely only when necessary, which aligns more closely with practical scenarios~\cite{HETC1,HETC2}. Moreover, by regulating the triggering condition, control resources can be utilized more efficiently and the potential communication channel congestion can be avoided.

In this paper, we present the hybrid ETC strategy, containing a static triggering condition and a dynamic triggering condition. First, let the nominal time be denoted as $t$, the triggering instants as $t_s$. Define the updated controller in the hybrid ETC strategy as $w_i(t)$ and the final controller as $u_i(t)$. $u_i(t)=\kappa_i(t_s)$ in $t\in[t_s,t_{s+1})$. Define the hybrid gate as $\mathcal{H}_g$.  When $|u_i|\geq \mathcal{H}_g$, this strategy reduces trigger frequency at static-threshold intervals to improve efficiency and avoid large impulses. When $|u_i|\leq \mathcal{H}_g$, the dynamic-threshold value depends on the magnitude of $u_i$. This enables precise control when needed. Hence, this strategy not only achieves a reasonable update interval but also avoids excessively large impulses.

\subsection{Control Architecture and Problem Statement}
\textbf{Control Architecture.} The MAS consists of $N$ followers and one leader. Based on Fig.~\ref{topo}, the leader provides the desired trajectory $y_r\in\mathbb{R}$ to the followers through the leader-follower communication matrix $\mathcal{B}$, while the information exchange among followers is characterized by the graph topology $\mathcal{G}$ and the adjacency matrix $\mathcal{A}$. For each follower, the control input is computed using its own state information, the output information received from its neighboring followers, the available leader information, and the local control modules, including the actor-critic-identifier neural networks and the hybrid event-triggered strategy. Direct information exchange may occur between neighboring followers according to the prescribed communication topology and no central coordinator is required during the control process. The control architecture is therefore a graph-based distributed leader-following consensus framework.

\textbf{Problem Statement.} Consider the MASs including $N$ followers. The dynamics of each follower are governed by the second-order system~\eqref{equ1}. During operation, each follower is subject to the stochastic dynamics $w_i\in\mathbb{R}$ and the non-specific fault-induced uncertainties $\xi_i\in\mathbb{R}$. Under these conditions, the control objectives of this paper are formulated as follows:
\begin{itemize}
    \item Each follower is capable of tracking the leader's desired trajectory $y_r$ with accuracy. All tracking errors within the system~\eqref{equ1} are SGUUB.

    \item The hybrid ETC strategy can effectively reduce the triggering frequency of control signals and achieve a balance between control frequency conservation and tracking performance while strictly avoiding Zeno behavior (i.e., ruling out an infinite number of triggers in a finite time interval).

    \item By integrating the economic idea of benchmark normalization, we proposes a novel analytical framework for ETC and opens up new directions for subsequent ETC research.
\end{itemize}

\section{Distributed Consensus Controller Design and Stability Analysis}\label{sec3}
Following the graph-based distributed leader-following consensus control architecture described in Section~\ref{sec2}-C, we now introduce the distributed controller design process by utilizing backstepping method~\cite{HETC1}. Then we propose some assumptions and lemmas and outline the stability analysis according to the Lyapunov-based stability theory~\cite{HETC2}.

\subsection{Distributed Consensus Controller Design}
The design procedure on the $i$th follower contains $n$ steps by using backstepping method. Consider the MASs~\eqref{equ1}, we define the graph-based tracking error $z_{i,k}$ as
        \begin{equation}
		\label{equ2}
		\begin{aligned}
			z_{i,1}&=\sum_{j=1}^Np_{i,j}(y_i-y_j)+q_i(y_i-y_r), \\
			z_{i,k}&=x_{i,k}-u^{*}_{i,k-1},
		\end{aligned}
	    \end{equation}
        where $i=1,2,...,N$ and $k=2,...,n$. We have $u^{*}_{i,k-1}$ as the optimal virtual controller during step $k-1$ designed by backstepping method and RL. From the MASs~\eqref{equ1}, the systems require the recursive and systematic control design procedure because the control input is not matched with system nonlinearities. We introduce the backstepping design procedure to satisfy this requirement.

        \textbf{Step 1.} According to~\eqref{equ1} and~\eqref{equ2}, the derivative of $z_{i,1}$ is 
        \begin{equation}
		\label{equ3}
		\begin{aligned}
			\dot{z}_{i,1}=&[\eta_i(x_{i,2}+h_{i,1}(x_{i,1}))-\sum_{j=1}^Np_{i,j}(x_{i,2}+f_{i,1}(x_{i,1}))\\&-q_i\dot{y}_r]dt+[\eta_if_{i,1}(x_{i,1})-\sum_{j=1}^Np_{i,j}f_{i,1}(x_{i,1})]dw_i,
		\end{aligned}
	    \end{equation}
        where $\eta_i=\sum_{j=1}^Np_{i,j}+q_i$.

        To achieve the control objective of accurate tracking, we employ the actor-critic-identifier structure from reinforcement learning to optimize the control process, and we design a system performance index function $\mathcal{J}_{i,1}(z_{i,1})$ as follows: 
        \begin{equation}
		\label{equ4}
		\begin{aligned}
			\mathcal{J}_{i,1}(z_{i,1})=\int_t^\infty o_{i,1}(z_{i,1}(\chi),\bar{u}^{*}_{i,1}(\chi))d\chi,
		\end{aligned}
	\end{equation}
	where $o_{i,1}(z_{i,1},\bar{u}_{i,1})=z_{i,1}^2+\bar{u}^{*2}_{i,1}$ is the value function, and $\bar{u}^{*}_{i,1}\in\mathbb{R}$ is the virtual controller designed by backstepping method. Then we define $u^{*}_{i,1}\in\mathbb{R}$ as the optimal virtual controller. Given a compact set $\Omega$ that includes both the origin and the desired trajectory $y_r(t)$, The admissible control is represented by $\psi(\Omega)$. Then the optimal system performance index function is formulated as
    \begin{equation}
		\label{equ5}
		\begin{aligned}
			\mathcal{J}_{i,1}^{*}(z_{i,1})&=\int_t^\infty h_{i,1}(z_{i,1}(\chi),u_{i,1}^*(\chi))d\chi\\&=\min_{\bar{u}_{i,1}\in\psi(\Omega)}\int_t^\infty h_{i,1}(z_{i,1}(\chi),\bar{u}_{i,1}(\chi))d\chi.
		\end{aligned}
	\end{equation}

    To determine the optimal control action to be taken in any state, the HJB equation is derived by computing the time derivative on both sides of~\eqref{equ5} as follows
    \begin{equation}
		\label{equ6}
		\begin{aligned}
			&H_{i,1}(z_{i,1},u_{i,1}^{*},\mathcal{J}_{i,1}^{*})
			=z_{i,1}^{2}+u_{i,1}^{*2}+\mathcal{Z}_{i,1}=0,
	   \end{aligned}
	\end{equation}
    where
    \begin{equation}
		\label{equ7}
		\begin{aligned}
			\mathcal{Z}_{i,1}=&\frac{d\mathcal{J}_{i,1}^{*}}{dz_{i,1}}\times[\eta_i(x_{i,2}+h_{i,1})-\sum_{j=1}^Np_{i,j}(x_{i,2}+f_{i,1})\\&-q_i\dot{y}_r]+\frac{1}{2}\frac{d^2\mathcal{J}_{i,1}^{*}}{dz^2_{i,1}}\times(\eta_if_{i,1}-\sum_{j=1}^Np_{i,j}f_{i,1})^2.
	   \end{aligned}
	\end{equation}
    
    Based on the formulation mentioned above, the optimal virtual controller $u_{i,1}^*$ uniquely corresponds to the optimal performance index function~\eqref{equ5}. Therefore, it is necessary for this unique control solution to satisfy the HJB equation~\eqref{equ6}\eqref{equ7}. Then, the optimal virtual controller $u_{i,1}^*$ can be derived by solving equation $\partial H_{i,1}/\partial u_{i,1}^*=0$ as
    \begin{equation}
		\label{equ8}
		\begin{aligned}
		u_{i,k}^*=-\frac{\eta_i}{2}\frac{d\mathcal{J}_{i,1}^*}{dz_{i,1}}.
		\end{aligned}
	\end{equation}

    However, due to the strong nonlinearity of the term $({d\mathcal{J}_{i,1}^*}/{dz_{i,1}})$, the optimal virtual controller $u_{i,1}^*$ in~\eqref{equ8} is unattainable. In order to achieve the leader-following consensus tracking purpose, we introduce the consensus error term $(d\mathcal{J}_{i,1}^*/dz_{i,1})$ with a designed positive parameter $\zeta_{i,1}$ as
    \begin{equation}
		\label{equ9}
		\begin{aligned}
		\frac{d\mathcal{J}_{i,1}^*}{dz_{i,1}}=\frac{1}{\eta_i^2}(\zeta_{i,1}z_{i,1}+{J}_{i,1}^0+2h_{i,1}^*),
		\end{aligned}
	\end{equation}
    where
    \begin{equation}
		\label{equ10}
		\begin{aligned}
			{J}_{i,1}^0=& \eta_i^2\frac{d\mathcal{J}_{i,1}^*}{dz_{i,1}}-2\zeta_{i,1}z_{i,1}-2h_{i,1}^*,\\
            h_{i,1}^*=& -\sum_{j=1}^Np_{i,j}(x_{j,2}+h_{j,1})+z_{i,1}(\eta_if_{i,1}-\sum_{j=1}^Nf_{j,1})^4\\&+\eta_ih_{i,1}+\frac{3}{4}(\eta_iz_{i,1}+z_{i,1}^3)-q_i\dot{y}_r.
	   \end{aligned}
	\end{equation}

    Now we introduce radial basis function NNs~\cite{simu2,simu3} to approximate the unknown terms ${J}_{i,1}^0$ and $h_{i,1}^*$ as
        \begin{equation}
		\label{equ11}
		\begin{aligned}       \mathcal{J}_{i,1}^0&=W^{*T}_{\mathcal{J}_{i,1}}Q_{\mathcal{J}_{i,1}}+\rho_{\mathcal{J}_{i,1}},\\
        h_{i,1}^*&=W^{*T}_{h_{i,1}}Q_{h_{i,1}}+\rho_{h_{i,1}},
		\end{aligned}
	\end{equation}
    where $W^{*}_{\mathcal{J}_{i,1}}$ and $W^{*}_{h_{i,1}}$ are the ideal weights, $Q_{\mathcal{J}_{i,1}}$ and $Q_{h_{i,1}}$ are the basis function vectors, $\rho_{\mathcal{J}_{i,1}}$ and $\rho_{h_{i,1}}$ are the approximation errors. Then we design the following actor-critic-identifier NNs structure. With $\rho_{i,1}=\rho_{\mathcal{J}_{i,1}}+2\rho_{h_{i,1}}$, we design the optimal virtual controller $\hat{u}^*_{i,1}$ and the actor NNs $\hat{W}_{u_{i,1}}$ as
        \begin{equation}
		\label{equ12}
		\begin{aligned}
        \hat{u}^*_{i,1}&=-\frac{1}{\eta_i}(\zeta_{i,1}z_{i,1}+\hat{W}^{T}_{h_{i,1}}Q_{h_{i,1}}+\frac{1}{2}\hat{W}_{u_{i,1}}^TQ_{\mathcal{J}_{i,1}}),\\
        \dot{\hat{W}}_{u_{i,1}}&=-Q_{\mathcal{J}_{i,1}}^TQ_{\mathcal{J}_{i,1}}(\varphi_{u_{i,1}}(\hat{W}_{u_{i,1}}-\hat{W}_{c_{i,1}})+\varphi_{c_{i,1}}\hat{W}_{c_{i,1}}),
		\end{aligned}
	\end{equation}
    where $\varphi_{u_{i,1}}$, $\varphi_{c_{i,1}}$ are designed parameters that satisfy $\varphi_{u_{i,1}}>1/2$ and $\varphi_{u_{i,1}}>\varphi_{c_{i,1}}>\varphi_{u_{i,1}}/2$. Then we design the consensus error term $(d\mathcal{J}_{i,1}^*/dz_{i,1})$ and the critic NNs $\hat{W}_{c_{i,1}}$ for evaluating control performance as 
    \begin{equation}
		\label{equ13}
		\begin{aligned}
        \frac{d\mathcal{J}_{i,1}^*}{dz_{i,1}}&=\frac{1}{\eta_i^2}(2\zeta_{i,1}z_{i,1}+2\hat{W}^{T}_{h_{i,1}}Q_{h_{i,1}}+\hat{W}^T_{c_{i,1}}Q_{\mathcal{J}_{i,1}}),\\
        \dot{\hat{W}}_{c_{i,1}}&=-\varphi_{c_{i,1}}Q_{\mathcal{J}_{i,1}}^TQ_{\mathcal{J}_{i,1}}\hat{W}_{c_{i,1}}.
		\end{aligned}
	\end{equation}

    The identifier NNs $\hat{W}_{h_{i,1}}$ for estimating the unknown dynamic function $h^*_{i,1}$ are designed as
    \begin{equation}
		\label{equ14}
		\begin{aligned}
        \hat{h}^*_{i,1}&=\hat{W}^{T}_{h_{i,1}}Q_{h_{i,1}},\\
        \dot{\hat{W}}_{h_{i,1}}&=\delta_{i,1}(Q_{h_{i,1}}z^3_{i,1}-\sigma_{i,1}\hat{W}_{h_{i,1}}),
		\end{aligned}
	\end{equation}
    where $\delta_{i,1}$ and $\sigma_{i,1}$ are designed positive parameters.

    \textbf{Step k and Step n (2 $\leq$ k $\leq$ n-1).} According to~\eqref{equ1} and~\eqref{equ2}, the derivatives of graph-based tracking errors $z_{i,k}$ and $z_{i,n}$ are
    \begin{equation}
		\label{equ15}
		\begin{aligned}
        \dot{z}_{i,k}=&(x_{i,k+1}+h_{i,k}(\bar{x}_{i,k})-u_{i,k-1}^*)dt+f_{i,k}(\bar{x}_{i,k})dw_i,\\
        \dot{z}_{i,n}=&(u_i-u_{i,k-1}^*+h_{i,n}(\bar{x}_{i,n})+c_i\xi_{i}(\bar{x}_{i,n},u_i))dt\\&+f_{i,n}(\bar{x}_{i,n})dw_i.
		\end{aligned}
	\end{equation}
    
    Similar to the derivation process in Step 1, here we design the actor-critic-identifier NNs structure including the optimal controller $u_i^*$, the actor NNs $\hat{W}_{u_{i,n}}$, the critic NNs $\hat{W}_{c_{i,n}}$ and the identifier NNs $\hat{W}_{h_{i,n}}$ as follows
    \begin{equation}
		\label{equ16}
		\begin{aligned}
        u_{i}^*&=-\zeta_{i,n}z_{i,n}-\hat{W}^{T}_{h_{i,n}}Q_{h_{i,n}}-\frac{1}{2}\hat{W}_{u_{i,n}}^TQ_{\mathcal{J}_{i,n}},\\
        \dot{\hat{W}}_{u_{i,n}}&=-Q_{\mathcal{J}_{i,n}}^TQ_{\mathcal{J}_{i,n}}(\varphi_{u_{i,n}}(\hat{W}_{u_{i,n}}-\hat{W}_{c_{i,n}})+\varphi_{c_{i,n}}\hat{W}_{c_{i,n}}),\\
        \dot{\hat{W}}_{c_{i,n}}&=-\varphi_{c_{i,n}}Q_{\mathcal{J}_{i,n}}^TQ_{\mathcal{J}_{i,n}}\hat{W}_{c_{i,n}},\\
        \dot{\hat{W}}_{h_{i,n}}&=\delta_{i,n}(Q_{h_{i,n}}z^3_{i,n}-\sigma_{i,n}\hat{W}_{h_{i,n}}),
		\end{aligned}
	\end{equation}
    where $\varphi_{u_{i,n}}$, $\varphi_{c_{i,n}}$, $\delta_{i,n}$ and $\sigma_{i,n}$ are all designed parameters that satisfy $\varphi_{u_{i,n}}>1/2$ and $\varphi_{u_{i,n}}>\varphi_{c_{i,n}}>\varphi_{u_{i,n}}/2$.

    The hybrid ETC strategy is introduced here to reduce the frequency of the controller $u_i^*$, we have
    \begin{equation}
		\label{equ17}
		\begin{aligned}
        u_i(t)=&\kappa_{i}(t_s),\\
        t_{s+1}=& \left\{
        \begin{aligned}
                &\inf \{t \in \mathbb{R}|| \pi_i \mid \geq \lambda_i\},|{u_i}|<\mathcal{H}_g \\
                &\inf \{t \in \mathbb{R}|| \pi_i|\geq \varrho_i|{u_i}| +\lambda_i^*\}, |{u_i}| \geq \mathcal{H}_g
                \end{aligned}
                \right.
		\end{aligned}
	\end{equation}

    When $|{u_i}| < \mathcal{H}_g$, the ETC-based updated controller $\kappa_i$ is designed as
    \begin{equation}
		\label{equ18}
		\begin{aligned}
        \kappa_{i}(t)=&u_i^*-\bar{\lambda}_i\tanh(\frac{\bar{\lambda}_iz_{i,n}}{\Upsilon_i}),\\
		\end{aligned}
	\end{equation}

    When $|{u_i}| \geq \mathcal{H}_g$, controller $\kappa_i$ is designed as 
    \begin{equation}
		\label{equ19}
		\begin{aligned}
        \kappa_{i}(t)=&-(1+\varrho_i)(u_i^*\tanh(\frac{u_i^*z_{i,n}}{\Upsilon_i})+\bar{\varrho}_i\tanh(\frac{\bar{\varrho}_iz_{i,n}}{\Upsilon_i})),\\
		\end{aligned}
	\end{equation}
    where $\varrho_i$, $\bar{\varrho}_i$, $\Upsilon_i$, $\lambda_i$, $\bar{\lambda}_i$ and $\lambda_i^*$ are all designed parameters satisfying $0<\varrho_i<1$ and $\bar{\varrho}_i>\lambda_i^*/(1-\varrho_i)$. And $\pi_i=\kappa_i-u_i$ denotes the error between the ETC-based updated controller $\kappa_i$ and the final controller $u_i$. 

\subsection{Assumptions and Lemmas}
For the development of our results in this paper we make the following assumptions and lemmas.
\begin{ass}[\cite{ass1}]\label{ass1}
      We assume that $y_r(t)$ and $\dot{y}_r(t)$ are bounded and known.
\end{ass}

\begin{ass}[\cite{ass2}]\label{ass2}
      We assume that existing an unknown nonnegative function $\Phi_i(\bar{x}_{i,n},u_i)$, satisfying the subsequent inequality condition
                \begin{equation}
            		\label{ass}
            		\begin{aligned}
                    |h_{i,n}(\hat{x}_{i,n})+c_i\xi_{i}(\bar{x}_{i,n},u_i)|\leq \Phi_i(\bar{x}_{i,n},u_i).
            		\end{aligned}
            	\end{equation}
\end{ass}

Assumptions~\ref{ass1} and~\ref{ass2} are employed to guarantee that the desired trajectory and the uncertainties are strictly bounded. Note that Assumptions~\ref{ass1} and~\ref{ass2} are standard for consensus tracking control of plants with non-specific fault-induced uncertainties, and similar assumptions can be found in\cite{ass1,ass2,ass3}. Without these assumptions, the proposed optimal distributed control algorithm cannot be realized.

\begin{lemma}[\cite{HETC2}]\label{lemma111}
        For any constant $\chi_1 \in \mathbb{R}$ and any positive constant $\chi_2 > 0$, the inequality $0\leq |\chi_1|-\chi_1\tanh\left(\frac{\chi_1}{\chi_2}\right)\leq\varepsilon^*\chi_2$ holds, where $\varepsilon^*$ is a constant satisfying $\varepsilon^*=e^{-(\varepsilon^*+1)}$ (approximately $\varepsilon^* \approx 0.2785$).
\end{lemma}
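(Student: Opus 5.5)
Since the expression $|\chi_1|-\chi_1\tanh(\chi_1/\chi_2)$ is even in $\chi_1$ (both terms are even), I will restrict to $\chi_1\ge 0$ and substitute $s=\chi_1/\chi_2\ge 0$. The quantity then equals $\chi_2\, g(s)$ with
\begin{equation*}
g(s)=s\,(1-\tanh s)=\frac{2s}{e^{2s}+1}.
\end{equation*}
This reduces the claim to showing $0\le g(s)\le \varepsilon^*$ for all $s\ge 0$, which no longer depends on $\chi_2$.

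The lower bound is immediate, because $\tanh s\le 1$ and $s\ge 0$ give $g(s)\ge 0$. For the upper bound, $g(0)=0$ and $g(s)\to 0$ as $s\to\infty$, so $g$ attains its maximum at an interior critical point. I will compute
\begin{equation*}
g'(s)=\frac{2(e^{2s}+1)-4s e^{2s}}{(e^{2s}+1)^2}.
\end{equation*}
Setting this to zero gives $e^{2s}(2s-1)=1$. The left side is strictly increasing where it is positive, so the critical point $s^*$ is unique.

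At $s^*$ we have $e^{2s^*}+1=e^{2s^*}\cdot 2s^*$, and therefore $g(s^*)=2s^*/(e^{2s^*}\cdot 2s^*)=e^{-2s^*}$. Set $\varepsilon^*=e^{-2s^*}$. The critical-point equation gives $2s^*=1+e^{-2s^*}=1+\varepsilon^*$, so $\varepsilon^*=e^{-(1+\varepsilon^*)}$, which is exactly the defining relation in the statement.

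It remains to check that this fixed point is unique. The map $\varepsilon\mapsto \varepsilon-e^{-(1+\varepsilon)}$ is strictly increasing, so there is only one solution, and a numerical solve gives $\varepsilon^*\approx 0.2785$. Multiplying back by $\chi_2$ yields the stated bound. The only delicate step is identifying the maximum value in closed form via the critical-point relation; the rest is routine.
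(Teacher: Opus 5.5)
Your proof is correct and complete. The evenness reduction to $g(s)=2s/(e^{2s}+1)$, the unique critical point given by $e^{2s}(2s-1)=1$, and the identification $g(s^*)=e^{-2s^*}$ with $2s^*=1+\varepsilon^*$ all check out. The monotonicity of $\varepsilon\mapsto\varepsilon-e^{-(1+\varepsilon)}$ then correctly pins down $\varepsilon^*$, and your argument also shows the bound is sharp, attained at $\chi_1=\pm s^*\chi_2$.

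The paper gives no proof of this lemma; it imports it from \cite{HETC2}. Your argument is the standard derivation of this classical inequality, so there is no alternative route in the paper to compare against.
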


\begin{lemma}[\cite{intro8,HETC2}]\label{lemma222}
        For any Hurwitz matrix $ A \in \mathbb{R} $ and any symmetric positive definite matrix $ B \in \mathbb{R} $, there exists a unique symmetric positive definite matrix $ A^* \in \mathbb{R} $ satisfying $A^T A^* + A^* A = -B$.
\end{lemma}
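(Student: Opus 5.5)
We read the matrices in the statement as square, $A,B,A^*\in\mathbb{R}^{n\times n}$; the notation $\mathbb{R}$ is presumably shorthand for this. The plan is the classical construction of the solution of the Lyapunov equation as an integral along the flow of $A$:
\begin{equation*}
A^*=\int_0^\infty e^{A^T t}\,B\,e^{A t}\,dt .
\end{equation*}

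The first step is to show the integral converges. Since $A$ is Hurwitz, every eigenvalue has real part at most $-2\alpha<0$ for some $\alpha>0$. From the Jordan form we get a bound $\|e^{At}\|\leq c\,e^{-\alpha t}$ for some constant $c\geq 1$. Because $\|e^{A^Tt}\|=\|e^{At}\|$, the integrand is bounded in norm by $c^2\|B\|e^{-2\alpha t}$, which is integrable on $[0,\infty)$.

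Symmetry of $A^*$ is immediate, since $(e^{A^Tt}Be^{At})^T=e^{A^Tt}B^Te^{At}=e^{A^Tt}Be^{At}$. For positive definiteness, take $x\neq 0$. Then
\begin{equation*}
x^TA^*x=\int_0^\infty (e^{At}x)^T B (e^{At}x)\,dt\geq \Theta_{min}(B)\int_0^\infty \|e^{At}x\|^2dt .
\end{equation*}
The last integral is strictly positive because $e^{At}$ is invertible, so $e^{At}x\neq 0$ for every $t$.

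Next, to verify the equation, set $M(t)=e^{A^Tt}Be^{At}$. Differentiating gives $\frac{d}{dt}M(t)=A^TM(t)+M(t)A$. Integrating from $0$ to $T$ yields
\begin{equation*}
A^T\!\int_0^T M\,dt+\Big(\int_0^T M\,dt\Big)A=M(T)-M(0)=M(T)-B .
\end{equation*}
Letting $T\to\infty$ and using $M(T)\to 0$ (by the exponential bound above) gives $A^TA^*+A^*A=-B$.

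For uniqueness, suppose $P_1$ and $P_2$ both solve the equation, and set $D=P_1-P_2$. Then $A^TD+DA=0$, so $\frac{d}{dt}\big(e^{A^Tt}De^{At}\big)=e^{A^Tt}(A^TD+DA)e^{At}=0$. Hence $e^{A^Tt}De^{At}=D$ for all $t\geq 0$. Letting $t\to\infty$, the left-hand side tends to $0$, so $D=0$. An alternative is to note that the linear map $X\mapsto A^TX+XA$ has eigenvalues $\lambda_i+\lambda_j$, all of which have negative real part, so the map is invertible; the semigroup argument above is more self-contained.

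The only step needing genuine care is the exponential decay estimate $\|e^{At}\|\leq ce^{-\alpha t}$. It underlies convergence of the integral, the limit $M(T)\to 0$, and uniqueness. I would derive it from the Jordan decomposition, bounding the polynomial factors $t^k e^{-2\alpha t}$ by $c_k e^{-\alpha t}$. Everything else is routine.
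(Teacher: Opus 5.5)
Your proof is correct and complete. The integral $A^*=\int_0^\infty e^{A^Tt}Be^{At}\,dt$ converges by the bound $\|e^{At}\|\leq ce^{-\alpha t}$, and it is symmetric and positive definite. It satisfies $A^TA^*+A^*A=-B$, which you get by integrating $\frac{d}{dt}M=A^TM+MA$ and using $M(T)\to 0$, and the semigroup identity $e^{A^Tt}De^{At}=D$ gives uniqueness.

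The paper gives no proof of Lemma~\ref{lemma222}. It imports the classical Lyapunov-equation result from the cited references, so there is no competing argument to compare against. Your construction is the standard textbook proof of that cited fact.

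Your reading of $\mathbb{R}$ as $\mathbb{R}^{n\times n}$ is the right one. Taken literally, the scalar case is trivial: $A^*=-B/(2A)>0$.

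Your uniqueness argument also proves more than the statement asks. It shows $A^*$ is the only solution among all $n\times n$ matrices, not just among symmetric positive definite ones.

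If you used the Kronecker-sum alternative instead, it would give existence, uniqueness and (via uniqueness) symmetry. You would still need the integral representation, or an equivalent argument, to get positive definiteness. So your choice to lead with the integral formula is the more economical route.
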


\begin{lemma}[\cite{lemma1,lemma2}]\label{lemma333}
        Suppose there exists a $C^2$ function $V(\chi)$ satisfying
        \begin{equation}
            		\label{lemma1}
            		\begin{aligned}
                    \Xi_1(\chi)\leq V(\chi)\leq\Xi_2(\chi),
            		\end{aligned}
        \end{equation}
        where $\Xi_1(\chi)$ and $\Xi_2(\chi)$ are class $K_{\infty}$ functions. The infinitesimal generator of $V(\chi)$ satisfying
        \begin{equation}
            		\label{lemma2}
            		\begin{aligned}
                    \dot{V}(\chi)\leq-\Psi_1V(\chi)+\Psi_2,
            		\end{aligned}
        \end{equation}
        where $\Psi_1$ and $\Psi_2$ are two positive constants. Then for $\forall \chi,\chi_0\in\mathbb{R}$, there is a unique strong solution such that
        \begin{equation}
            		\label{lemma3}
            		\begin{aligned}
                    {V}(\chi)\leq e^{-\Psi_1t}V(0)+\frac{\Psi_2}{\Psi_1}.
            		\end{aligned}
        \end{equation}

        If inequality~\eqref{lemma3} holds, the states in $V(\chi)$ are all SGUUB in mean square.
\end{lemma}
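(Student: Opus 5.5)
The statement to prove is Lemma~\ref{lemma333}: a $C^2$ function $V$ sandwiched between class $K_\infty$ functions, with infinitesimal generator satisfying $\mathcal{L}V\le-\Psi_1V+\Psi_2$, obeys $E[V]\le e^{-\Psi_1 t}V(0)+\Psi_2/\Psi_1$, and the states are SGUUB in mean square. The plan has three parts: establish existence of a solution, apply It\^o's formula to the weighted function $e^{\Psi_1 t}V$ and take expectations, then invoke a Gronwall/comparison step. Throughout, I read the inequality~\eqref{lemma2} as a bound on the generator $\mathcal{L}V$ and \eqref{lemma3} as a bound on $E[V(\chi(t))]$, since for a stochastic system only the expectation is controlled.

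\textbf{Existence and uniqueness.} Under the standing smoothness and Lipschitz assumptions on the drift and diffusion (here $h_{i,k},f_{i,k}$ are locally Lipschitz), a unique strong solution exists up to the explosion time $\tau_\infty=\lim_{m\to\infty}\tau_m$, where $\tau_m=\inf\{t:\|\chi(t)\|\ge m\}$. To show that no explosion occurs, I apply It\^o's formula to $V$ on $[0,t\wedge\tau_m]$. The stochastic integral is stopped, so it is a true martingale with zero mean, and this gives
\[
E[V(\chi(t\wedge\tau_m))]\le V(\chi_0)+\Psi_2 t .
\]
On the event $\{\tau_m\le t\}$ we have $V\ge\Xi_1(m)$, so $P(\tau_m\le t)\,\Xi_1(m)\le V(\chi_0)+\Psi_2t$. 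Because $\Xi_1\in K_\infty$, the left factor $\Xi_1(m)\to\infty$, hence $P(\tau_m\le t)\to0$ for every $t$ and $\tau_\infty=\infty$ almost surely.

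\textbf{Exponential estimate.} Next apply It\^o's formula to $e^{\Psi_1 t}V(\chi(t))$:
\[
d\big(e^{\Psi_1 t}V\big)=e^{\Psi_1 t}\big(\Psi_1 V+\mathcal{L}V\big)dt+e^{\Psi_1 t}\tfrac{\partial V}{\partial \chi}g\,dw \le e^{\Psi_1 t}\Psi_2\,dt+\text{martingale term}.
\]
Integrate on $[0,t\wedge\tau_m]$ and take expectations, so that the martingale term vanishes. Then let $m\to\infty$, using Fatou's lemma on the left side and monotone convergence on the right. This yields
\[
e^{\Psi_1 t}E[V(t)]\le V(0)+\frac{\Psi_2}{\Psi_1}\big(e^{\Psi_1 t}-1\big),
\]
which gives \eqref{lemma3} after multiplying by $e^{-\Psi_1 t}$.

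\textbf{SGUUB conclusion.} From \eqref{lemma3}, $\limsup_{t\to\infty} E[V]\le \Psi_2/\Psi_1$. Combining this with $\Xi_1(\|\chi\|)\le V$ and the fact that $\Xi_1$ is invertible on $[0,\infty)$, we obtain ultimate bounds on $E[\Xi_1(\|\chi\|)]$. When $V$ is quadratic (for instance the quartic/quadratic Lyapunov functions later used in Theorem~\ref{theorem}), this becomes a mean-square bound, and the residual set can be made arbitrarily small by tuning the design parameters $\Psi_1,\Psi_2$. That is exactly the SGUUB property in mean square.

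\textbf{Main obstacle.} The delicate step is the localization. We need the stopped It\^o integral to be a martingale and the passage $m\to\infty$ to be justified, which requires the generator bound to hold globally and the Lipschitz-type regularity needed for strong solutions. I would handle this with the stopping times $\tau_m$ and Fatou's lemma, as sketched above.
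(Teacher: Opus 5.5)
The paper gives no proof of Lemma~\ref{lemma333}; it imports it from the cited works. Your proposal is therefore a self-contained version of the standard argument behind that citation, not a competing route. The argument runs as follows: a maximal local strong solution from locally Lipschitz drift and diffusion; Khasminskii's non-explosion test with the stopping times $\tau_m$; then It\^o's formula for $e^{\Psi_1 t}V$ stopped at $t\wedge\tau_m$, followed by Fatou's lemma. The non-explosion step uses $V\ge 0$ to get $\mathcal{L}V\le\Psi_2$, and uses $\|\chi(\tau_m)\|=m$ for $m>\|\chi_0\|$ by path continuity. These steps are correct. On the right-hand side you do not even need monotone convergence, since $\int_0^{t\wedge\tau_m}e^{\Psi_1 s}\,ds\le(e^{\Psi_1 t}-1)/\Psi_1$ for every $m$.

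Compared with the bare citation, your version exposes hypotheses the statement leaves implicit. First, the closed-loop drift and diffusion must be locally Lipschitz. Second, \eqref{lemma3} must be read as a bound on $\mathbb{E}[V]$. That reading is a necessary correction, not a weakening. For $dx=-x\,dt+dw$ and $V=x^2/2$ one has $\mathcal{L}V=-2V+\tfrac12$. Yet $x(t)$ is Gaussian with positive variance, so $V(x(t))>e^{-2t}V(0)+\tfrac14$ with positive probability, and the literal pathwise inequality fails.

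Two points in your final step need tightening.

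First, a general $K_\infty$ lower bound only gives $\limsup_{t\to\infty}\mathbb{E}[\Xi_1(\|\chi(t)\|)]\le\Psi_2/\Psi_1$. By Markov's inequality this yields ultimate boundedness in probability. The mean-square clause needs something like $\Xi_1(r)\ge c\,r^2$, so you are right to restrict it. However, the quartic terms $\tfrac14 z_{i,k}^4$ of Theorem~\ref{theorem} are not quadratic. They yield a fourth-moment bound, and you need $\mathbb{E}[z^2]\le(\mathbb{E}[z^4])^{1/2}$ (Jensen or Cauchy--Schwarz) to pass to mean square.

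Second, drop the claim that the residual set can be made arbitrarily small by tuning $\Psi_1,\Psi_2$. In the lemma these are fixed constants. SGUUB only requires some ultimate bound, namely any level above $\Psi_2/\Psi_1$, reached after a time depending on $V(0)$. That remark belongs to a particular application, not to this proof.
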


Lemma~\ref{lemma111} is employed in the stability analysis for the substitution and scaling between the ETC-based updated controller $\kappa_i$ and the final controller $u_i$.
Lemma~\ref{lemma222} is used in the stability analysis for matrix elimination with respect to the NN weights.
Lemma~\ref{lemma333} is applied in the Lyapunov-based stability analysis.

\subsection{Stability Analysis}
In this part, we use Lyapunov-based stability analysis to establish theoretical results, with Theorem \ref{theorem} and proof introduced subsequently.
\begin{theorem}\label{theorem}
       Consider the MASs~\eqref{equ1} with stochastic dynamics $w_i\in\mathbb{R}$ and the non-specific fault-induced uncertainties $\xi_i\in\mathbb{R}$, the designed optimal controller~\eqref{equ16}, the actor-critic-identifier NNs structure~\eqref{equ16} and the hybrid ETC strategy~\eqref{equ17}~\eqref{equ18}~\eqref{equ19}. During the control process, all errors within MASs~\eqref{equ1} are SGUUB. In other words, all followers can track the leader’s desired trajectory. Meanwhile, there exists a constant $t^{*}>0$ such that all inter-execution intervals $\{t_{s+1}-t_{s}\}$ are uniformly lower bounded by $t^{*}$.
\end{theorem}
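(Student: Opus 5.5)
The proof will use a stochastic Lyapunov function built recursively along the $n$ backstepping steps, and then Lemma~\ref{lemma333} to conclude mean-square SGUUB. For each follower $i$ and step $k$ we will take
\begin{equation*}
V_{i,k}=\frac{1}{4}z_{i,k}^4+\frac{1}{2}\tilde W_{u_{i,k}}^T\tilde W_{u_{i,k}}+\frac{1}{2}\tilde W_{c_{i,k}}^T\tilde W_{c_{i,k}}+\frac{1}{2\delta_{i,k}}\tilde W_{h_{i,k}}^T\tilde W_{h_{i,k}},
\end{equation*}
and $V=\sum_{i=1}^N\sum_{k=1}^n V_{i,k}$. The quartic term is used because the identifier law \eqref{equ14} is driven by $z_{i,k}^3$, and because the It\^o correction $\frac{3}{2}z_{i,k}^2 f^2$ has to be absorbed by Young's inequality. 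That absorption produces the terms $\frac{3}{4}z_{i,k}^4$ and $z_{i,k}(\cdot)^4$ already built into $h_{i,1}^*$ in \eqref{equ10}. The function $V$ is quadratic/quartic in its arguments, so the $K_\infty$ sandwich \eqref{lemma1} holds trivially.

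\textbf{Infinitesimal generator, step by step.} At Step 1 we compute $\mathcal{L}V_{i,1}$ using \eqref{equ3}, substitute $x_{i,2}=z_{i,2}+\hat u^*_{i,1}$ together with \eqref{equ12}, and replace $h_{i,1}^*$ by its NN form \eqref{equ11}. This yields $-\zeta_{i,1}z_{i,1}^4$, cross terms in $\tilde W_{h}$ that cancel against the identifier update, and residual terms $z_{i,1}^3\tilde W_{u}^TQ_{\mathcal J}$ and $z_{i,1}^3 z_{i,2}$. The residuals are bounded by Young's inequality, for instance $z_{i,1}^3z_{i,2}\le\frac34 z_{i,1}^4+\frac14 z_{i,2}^4$, and the last piece is passed to Step 2. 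For the actor and critic weights, the updates \eqref{equ12}--\eqref{equ13} give
\begin{equation*}
\tilde W_u^T\dot{\hat W}_u+\tilde W_c^T\dot{\hat W}_c\le -\Big(\varphi_u-\tfrac{\varphi_c}{2}\Big)\|Q\|^2\|\tilde W_u\|^2-\Big(\varphi_c-\tfrac{\varphi_u}{2}\Big)\|Q\|^2\|\tilde W_c\|^2+\text{const}.
\end{equation*}
Both coefficients are positive exactly because of the conditions $\varphi_u>\varphi_c>\varphi_u/2$. The positive definiteness of $Q^TQ$ (minimum eigenvalue $\Theta_{min}$) is handled with Lemma~\ref{lemma222}, which turns these terms into $-\|\tilde W\|^2$ terms. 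The $\sigma$-modification gives $-\frac{\sigma}{2}\|\tilde W_h\|^2+\frac{\sigma}{2}\|W^*_h\|^2$. Steps $k$ up to $n-1$ repeat this argument with \eqref{equ15}.

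\textbf{Step $n$ and the hybrid ETC.} Here $z_{i,n}^3u_i$ is handled by writing $u_i=\kappa_i-\pi_i$. In the region $|u_i|<\mathcal H_g$ we have $|\pi_i|<\lambda_i$ by \eqref{equ17}, and \eqref{equ18} combined with Lemma~\ref{lemma111} gives
\begin{equation*}
z_{i,n}^3u_i\le z_{i,n}^3u_i^*-|\bar\lambda_i z_{i,n}^3|+\lambda_i|z_{i,n}^3|+0.2785\Upsilon_i,
\end{equation*}
after applying the $\tanh$ argument with $z_{i,n}^3$; the $\tanh$ arguments should therefore be read with $z_{i,n}^3$. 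Choosing $\bar\lambda_i\ge\lambda_i$ makes this bounded by $z_{i,n}^3u_i^*+\text{const}$. In the region $|u_i|\ge\mathcal H_g$ we write $\kappa_i-u_i=\theta\varrho_i u_i+\theta'\lambda_i^*$ with $|\theta|,|\theta'|\le1$. This gives $u_i=(\kappa_i-\theta'\lambda_i^*)/(1+\theta\varrho_i)$, and since $z_{i,n}^3\kappa_i\le0$ we get $z_{i,n}^3u_i\le z_{i,n}^3\kappa_i/(1+\varrho_i)+|z_{i,n}^3|\lambda_i^*/(1-\varrho_i)$. Inserting \eqref{equ19} and Lemma~\ref{lemma111}, the condition $\bar\varrho_i>\lambda_i^*/(1-\varrho_i)$ cancels the $\lambda_i^*$ term, leaving $z_{i,n}^3u_i^*+2\cdot0.2785\Upsilon_i$. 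The fault term is absorbed into $h_{i,n}^*$ via Assumption~\ref{ass2}, and $\dot y_r$ is bounded by Assumption~\ref{ass1}.

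Summing over all $i$ and $k$ then gives $\mathcal{L}V\le-\Psi_1V+\Psi_2$, with $\Psi_1$ equal to the minimum of the resulting coefficients and $\Psi_2$ collecting the NN approximation errors, $\|W^*\|^2$, the $\Upsilon_i$ terms and the $\lambda$ terms. Lemma~\ref{lemma333} then yields SGUUB in mean square, and the $z_{i,1}$ bound transfers to $y_i-y_r$ through the nonsingularity of $\mathcal L+\mathcal B$. I expect the main obstacle to be the Step-$n$ ETC bookkeeping above: the sign arguments only close if the $\tanh$ arguments are matched to $z_{i,n}^3$, and these steps will need care.

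\textbf{Zeno exclusion.} On $[t_s,t_{s+1})$ we have $u_i$ constant, so $\frac{d}{dt}|\pi_i|\le|\dot\kappa_i|$. By the boundedness just established, $\dot\kappa_i$ is a smooth function of bounded signals: $z$, the weights, $\dot{\hat W}$, $\dot y_r$ (which enter through \eqref{equ15}) and the drift. It is therefore bounded by some $\bar\kappa$ on the ultimate set; with the stochastic increments, this bound holds almost surely on the bounded set. Since $\pi_i(t_s)=0$ and triggering requires $|\pi_i|\ge\lambda_i$, or $|\pi_i|\ge\varrho_i|u_i|+\lambda_i^*\ge\lambda_i^*$, every inter-execution interval satisfies $t_{s+1}-t_s\ge t^*:=\min(\lambda_i,\lambda_i^*)/\bar\kappa>0$.
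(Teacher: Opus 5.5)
Your outline is the paper's proof almost step for step. It uses the same Lyapunov function \eqref{equ24}, Lemma~\ref{lemma111} in each ETC regime, Lemma~\ref{lemma333} for mean-square SGUUB, and Zeno exclusion from $\pi_i(t_s)=0$ plus a rate bound. Your $\theta,\theta'$ are the paper's $\varpi_{i,1},\varpi_{i,2}$, your constants $0.2785\Upsilon_i$ and $2\cdot0.2785\Upsilon_i$ are those of \eqref{equ26} and \eqref{equ25}, and your $\bar\kappa$ is its $\Lambda_i$. Your Step-$n$ bookkeeping is tighter than the paper's.

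The step that fails, in your version and in the paper's, is the Zeno argument. First, $\kappa_i$ is a smooth function of $z_{i,n}$, whose differential carries a $dw_i$ term. By It\^o's formula $d\kappa_i$ then has a nonzero $dw_i$ component, so $\dot\kappa_i$ does not exist and no almost-sure bound $|\dot\kappa_i|\le\bar\kappa$ is available. For every $h>0$ the noise on $[t_s,t_s+h]$ drives $|\pi_i|$ past $\min\{\lambda_i,\lambda_i^*\}$ with positive probability, whenever the diffusion coefficient of $\kappa_i$ is nonzero. So a deterministic $t^*$ cannot come out of this argument. Mean-square SGUUB also gives no pathwise bound on the signals. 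Path continuity yields at most a random lower bound on each bounded time interval.

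Second, $\pi_i(t_s)=0$ is not automatic, because \eqref{equ18} and \eqref{equ19} are different maps. When an update moves $|u_i|$ across $\mathcal H_g$, the new $\kappa_i$ differs from the value just latched. Take the simulation values $\mathcal H_g=6$, $\lambda_i=5.2$, $\varrho_i=0.28$, $\lambda_i^*=3.2$, $\Upsilon_i=0.3$, $\bar\lambda_i=\bar\varrho_i=3$, and a trigger at a state with $z_{i,n}=10^{-3}$, $u_i^*=-7$. Then \eqref{equ18} gives $\kappa_i\approx-7.03$ and \eqref{equ19} gives $\kappa_i\approx-0.25$. Each latched value flips the regime and violates the other threshold ($6.78>5.17$, then $6.78>5.2$). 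The rule therefore retriggers infinitely often at a single instant. You need a single $\kappa_i$ for both regimes, as in standard switching-threshold schemes, or a hysteresis, before any rate argument applies.

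Secondary points:
\begin{itemize}
\item The condition $\bar\lambda_i\ge\lambda_i$ is not a hypothesis of the theorem, and the paper itself uses $\lambda_i=5.2>\bar\lambda_i=3$. You do not need it: $(\lambda_i-\bar\lambda_i)|z_{i,n}|^3\le\frac34 z_{i,n}^4+\frac14(\lambda_i-\bar\lambda_i)^4$ is absorbed by $-\zeta_{i,n}z_{i,n}^4$ for $\zeta_{i,n}$ large enough.
\item You also need not change the $\tanh$ arguments to $z_{i,n}^3$. Lemma~\ref{lemma111} gives $z_{i,n}^2(\chi z_{i,n})\tanh(\chi z_{i,n}/\Upsilon_i)\ge|z_{i,n}^3\chi|-0.2785\Upsilon_i z_{i,n}^2$, and Young's inequality absorbs the leftover $z_{i,n}^2$ term.
\item Lemma~\ref{lemma222} is not what turns $-\|Q_{\mathcal J}\|^2\|\tilde W\|^2$ into $-c\|\tilde W\|^2$. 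That needs a positive lower bound on $\|Q_{\mathcal J}\|$, which Gaussian bases have only on a compact set; this is where ``semi-global'' enters.
\item Absorbing $c_i\xi_i(\bar x_{i,n},u_i)$ into $h^*_{i,n}$ is circular. Both $\xi_i$ and $\Phi_i$ depend on $u_i$, which is itself built from $\hat W_{h_{i,n}}^TQ_{h_{i,n}}$.
\item Passing from $z_{i,1}$ to $y_i-y_r$ needs $\mathcal L+\mathcal B$ nonsingular, i.e., a spanning tree rooted at the leader. You should state this as an assumption.
\end{itemize}
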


\textit{Proof:}
        To ensure the convergence of all signals in the closed-loop MASs~\eqref{equ1}, we construct the following Lyapunov function candidate as follows
        \begin{equation}
            		\label{equ24}
            		\begin{aligned}
                    V=&\sum\limits_{i=1}^{N}\sum\limits_{k=1}^{n} V_{i,k},\\
                    V_{i,k}=&\frac{1}{4}z_{i,k}^4+\frac{1}{2}\tilde{W}^T_{u_{i,k}}\tilde{W}_{u_{i,k}}+\frac{1}{2}\tilde{W}^T_{c_{i,k}}\tilde{W}_{c_{i,k}}\\&+\frac{1}{2}\tilde{W}^T_{h_{i,k}}\delta_{i,k}^{-1}\tilde{W}_{h_{i,k}},
            		\end{aligned}
        \end{equation}
        where $k=1,...,n$. According to the hybrid ETC strategy~\eqref{equ19}, when $|u_i|\geq \mathcal{H}_g$, we define two time-varying constants $|\varpi_{i,1}(t)|\leq1$ and $|\varpi_{i,2}(t)|\leq1$, satisfying $\varpi_{i,1}(t_s)=\varpi_{i,2}(t_s)=0$, $\varpi_{i,1}(t_{s+1})=\varpi_{i,2}(t_{s+1})=\pm1$, and we have $\kappa_i(t)=(1+\varpi_{i,1}\varrho_i)u^*_i(t)+\varpi_{i,2}\lambda_i^*$. Based on Lemma~\ref{lemma111}, we have
        \begin{equation}
		\label{equ25}
		\begin{aligned}
			z_{i,k}\frac{\kappa_i(t)-\varpi_{i,2}\lambda_i^*}{1+\varpi_{i,1}\varrho_i}\leq 0.557\Upsilon_i. 
		\end{aligned}
	    \end{equation}

        When $|u_i|< \mathcal{H}_g$, we define a time-varying constant $\vartheta_i(t)\leq1$, satisfying $\vartheta_i(t_s)=0$ and $\vartheta_i(t_{s+1})=\pm1$, then based on~\eqref{equ18} and Lemma~\ref{lemma111}, we have
        \begin{equation}
		\label{equ26}
		\begin{aligned}
			z_{i,k}(\bar{\lambda}_i\tanh(\frac{\bar{\lambda}_iz_{i,n}}{\Upsilon_i})-\vartheta_i(t)\lambda_i)\leq 0.2785\Upsilon_i.
		\end{aligned}
	    \end{equation}

        Furthermore, based on the Young’s inequality and Assumption~\ref{ass2}, we have 
        \begin{equation}
		\label{equ27}
		\begin{aligned}
			z_{i,k}^3\rho_{h_{i,k}}\leq&\frac{1}{2}z_{i,k}^6+\frac{1}{2}\rho^2_{h_{i,k}}\\
            \eta_i z_{i,k}^3z_{i,k+1}\leq&\frac{3}{4}\eta_i z_{i,k}^4+\frac{1}{4}\eta_i z_{i,k+1}^4\\
            z_{i,k}^3(h_{i,n}+c_i\xi_{i})\leq&z_{i,k}^3\Phi_i+\frac{1}{2}\\
            -\frac{1}{2}z_{i,k}^3\hat{W}^T_{u_{i,k}}Q_{\mathcal{J}_{i,k}}\leq&\frac{1}{4}z_{i,k}^6+\frac{1}{4}(\hat{W}^T_{u_{i,k}}Q_{\mathcal{J}_{i,k}})^2
		\end{aligned}
	    \end{equation}
        
        Then considering Assumption~\ref{ass1}, Lemma~\ref{lemma222} and utilizing~\eqref{equ15}-\eqref{equ19},~\eqref{equ25}-\eqref{equ27}, we take the derivative of Lyapunov function $V$ as
        \begin{equation}
		\label{equ28}
		\begin{aligned}
			\dot{V}\leq&-\sum\limits_{i=1}^{N}\sum\limits_{k=1}^{n}\zeta_{i,k}z_{i,k}^4-\sum\limits_{i=1}^{N}\sum\limits_{k=1}^{n}\frac{\varphi_{u_{i,k}}\Theta_{min}(\mathcal{J})}{2}\tilde{W}^T_{u_{i,k}}\tilde{W}_{u_{i,k}}\\
            &-\sum\limits_{i=1}^{N}\sum\limits_{k=1}^{n}\frac{\varphi_{c_{i,k}}\Theta_{min}(\mathcal{J})}{2}\tilde{W}^T_{c_{i,k}}\tilde{W}_{c_{i,k}}+\sum\limits_{i=1}^{N}\sum\limits_{k=1}^{n}\frac{1}{2}\rho^2_{h_{i,k}}\\
            &-\sum\limits_{i=1}^{N}\sum\limits_{k=1}^{n}\frac{\sigma_{i,k}}{2\Theta_{max}(\delta)}\tilde{W}^T_{h_{i,k}}\delta_{i,k}^{-1}\tilde{W}_{h_{i,k}}+\sum\limits_{i=1}^{N}\sum\limits_{k=1}^{n}\frac{1}{2}z_{i,k}^6\\
            &+\sum\limits_{i=1}^{N}\sum\limits_{k=1}^{n}\frac{\sigma_{i,k}}{2}{W}^{*T}_{h_{i,k}}{W}^*_{h_{i,k}}+\sum\limits_{i=1}^{N}\sum\limits_{k=1}^{n}\frac{1}{4}(\hat{W}^T_{u_{i,k}}Q_{\mathcal{J}_{i,k}})^2\\
            &+\sum\limits_{i=1}^{N}\sum\limits_{k=1}^{n}\frac{\varphi_{u_{i,k}}+\varphi_{c_{i,k}}}{2}||W^{*T}_{\mathcal{J}_{i,1}}Q_{\mathcal{J}_{i,1}}||^2+\sum\limits_{i=1}^{N}\frac{1}{4}\eta_i z_{i,2}^4\\
            &+\sum\limits_{i=1}^{N}\sum\limits_{k=2}^{n}z_{i,k+1}^4+\sum\limits_{i=1}^{N}0.8355\Upsilon_i,
		\end{aligned}
	    \end{equation}
        where the maximum and minimum eigenvalues of $\delta_{i,k}^{-1}$ and $W_{\mathcal{J}_{i,k}}W^{T}_{\mathcal{J}_{i,k}}$ are defined as $\Theta_{max}(\delta)$ and $\Theta_{min}(\mathcal{J})$. Then we have
        \begin{equation}
		\label{equ29}
		\begin{aligned}
			\dot{V}\leq -\wp V+\Delta
		\end{aligned}
	    \end{equation}
        where
        \begin{equation}
		\label{equ30}
		\begin{aligned}
			\wp=&\min\{4\zeta_{i,k},\varphi_{u_{i,k}}\Theta_{min}(\mathcal{J}),\varphi_{c_{i,k}}\Theta_{min}(\mathcal{J}),\frac{\sigma_{i,k}}{\Theta_{max}(\delta)}\},\\
            \Delta=&\sum\limits_{i=1}^{N}\sum\limits_{k=1}^{n}[\frac{\rho^2_{h_{i,k}}+z_{i,k}^6}{2}+\frac{(\hat{W}^T_{u_{i,k}}Q_{\mathcal{J}_{i,k}})^2+\eta_i z_{i,2}^4}{4}+z_{i,k+1}^4\\
            &+\frac{\varphi_{u_{i,k}}+\varphi_{c_{i,k}}}{2}||W^{*T}_{\mathcal{J}_{i,1}}Q_{\mathcal{J}_{i,1}}||^2]+\sum\limits_{i=1}^{N}0.8355\Upsilon_i.
		\end{aligned}
	    \end{equation}
              
        And \eqref{equ30} satisfies
        \begin{equation}
		\label{equ31}
		\begin{aligned}
			0\leq V(t)\leq e^{-\wp t}V(0)+\frac{\Delta}{\wp}.
		\end{aligned}
	    \end{equation}
        
        Based on \eqref{equ31} and Lemma~\ref{lemma333}, the tracking error $z_{i,k}$ and NNs approximation errors $\tilde{W}^T_{u_{i,k}}$, $\tilde{W}_{c_{i,k}}$, $\tilde{W}_{h_{i,k}}$ are SGUUB. 
        Since all closed-loop errors are bounded and based on \eqref{equ18}\eqref{equ19}, we have that $\kappa_i(t)$ is continuous and bounded. Furthermore, observing that $\pi_i(t_s)=0$ and $\lim_{t\rightarrow t_{s+1}}\pi_i(t)=\max\{\lambda_i,\varrho_i|{u_i}| +\lambda_i^*\}$, there exists a constant $\Lambda_i \in \mathbb{R}^+$ ensuring that the event-triggered minimum inter-execution interval $t^*$ satisfies $t^*\geq\max\{\lambda_i,\varrho_i|{u_i}| +\lambda_i^*\}/\Lambda_i$. This means that the Zeno behavior is successfully avoided.
        $\hfill\square$

        In Theorem~\ref{theorem}, we established that the proposed RL-based optimal distributed control algorithm enables the followers in MASs~\eqref{equ1} to achieve consensus tracking while keeping tracking errors bounded.
        Specifically, the proof introduces the Lyapunov function in \eqref{equ24}, which is constructed based on the tracking errors and NNs approximation errors. 
        Then, by applying Lyapunov stability analysis, we proved through \eqref{equ29} and \eqref{equ31} that all tracking errors are SGUUB. 
        Compared with the previous work in~\cite{compare}, the present analysis differs in \eqref{equ16}, because the control law incorporates an RL-based actor–critic–identifier neural network structure.
        In addition, unlike the static triggering threshold or dynamic triggering threshold adopted in~\cite{intro7,intro8,ass1}, the threshold in our proposed hybrid event-triggering condition varies with $\mathcal{H}_g$. Moreover, the adopted hybrid ETC strategy is shown to exclude Zeno behavior.

\section{Simulation Results}\label{sec4}
In this section, we present the numerical simulation in terms of the single-axis robotic manipulator. It is organized as follows: we first introduce the parameter setting for the control algorithm. and we analyze the simulation results, including the tracking performance (Fig.~\ref{tracking}), the NNs weight norms (Fig.~\ref{NNs}), the hybrid ETC statistics and update intervals (Table~\ref{table}, Fig.~\ref{controller}, Fig.~\ref{etc}). Then we propose a quantitative analysis method rooted in
economics for ETC (Fig.~\ref{ku}). Finally, we discuss the comparisons (Fig.~\ref{comp}, Fig.~\ref{errorcomp} and Table~\ref{table}).
	
\textbf{Parameter Setting.} We now demonstrate the practical applicability of the single-axis robotic manipulator \cite{simu1,simu2} with stochastic uncertainties as
    \begin{equation}
		\label{equ_simu}
		\begin{aligned}
			\mathcal{I} \ddot{\theta}_i(t) + \mathcal{Q} \dot{\theta}_i(t) &+ \mathcal{M}g\mathcal{K}\sin(\theta_i(t)) \\ &+ f(\theta_i,\dot{\theta}_i,u)+c_i \xi(\theta_i,\dot{\theta}_i,u) - u(t) =0
		\end{aligned}
	    \end{equation}
    where $\theta_i(t)$, \(\dot{\theta}_i(t)\), \(\ddot{\theta}_i(t)\) denote the robotic manipulator joint angle, angle velocity and angle acceleration, respectively. In this example, $\mathcal{I}=10.5~kg\cdot m^2$ represents the moment of inertia, $\mathcal{Q}=30.5~N\cdot m\cdot s/rad$ represents the viscous friction, \(\mathcal{M}=1~kg\) is the mass, \(\mathcal{K}=1.5~m\) is the length, $g=9.8~m/s^2$ is the gravitational acceleration, $f(\theta_i,\dot{\theta}_i,u)=\dot{\theta}_i^2\sin(\theta_i)+0.1\sin(t)\cos(\theta_i\dot{\theta}_i)$ denotes the unknown stochastic disturbances, $c_i=1-\mathrm{e}^{-\mu\left(t-T_0\right)}$ with $\mu=15$ and $T_0=10~s$, $\xi(\theta_i,\dot{\theta}_i,u)=10(\theta_i\dot{\theta}_i+\sin(u))+8$ denotes the non-specific fault-induced uncertainties and \(u(t)\) is the control input used to represent the motor torque. The control objective is to make the joint angle $\theta_i$ $(i=1,2,3.)$ track the desired trajectory $\theta_r=0.8\cos(2t)\sin(t+0.5t^2)+0.2\sin^2(t)+0.1$.

    Based on the communication topology in Fig.~\ref{topo}, we design the matrices of communication network as: the adjacency matrix $\mathcal{A}=[0,0,0;1,0,0;1,0,0]$, the Laplacian matrix $\mathcal{L}=[0,0,0;-1,1,0;-1,0,1]$ and the connection matrix $\mathcal{B}=\operatorname{diag}(1,0,0)$.

    The initial position for each follower robotic manipulator are $\theta_1(0)=0$, $\theta_2(0)=-0.1$ and $\theta_3(0)=0.2$. Furthermore, the designed parameters utilized in the control process are outlined below: the tracking error feedback gain $\zeta_{i,k}=1.2$; the actor-critic-identifier NNs adaptation/leakage gains are $\varphi_{u_{i,k}}=13$, $\varphi_{c_{i,k}}=15$, $\delta_{i,k}=\sigma_{i,k}=1.5$; the hybrid ETC strategy constants are $\mathcal{H}_g=6$, $\varrho_i=0.28$, $\bar{\varrho}_i=3$, $\Upsilon_i=0.3$, $\lambda_i=5.2$, $\bar{\lambda}_i=3$ and $\lambda_i^*=3.2$. In addition, the duration of the simulation experiment is $15~\mathrm{s}$, with a time-step of 0.01. 
    
        \begin{figure}
            \centering
            \includegraphics[width=0.45\textwidth, height=0.11\textheight]{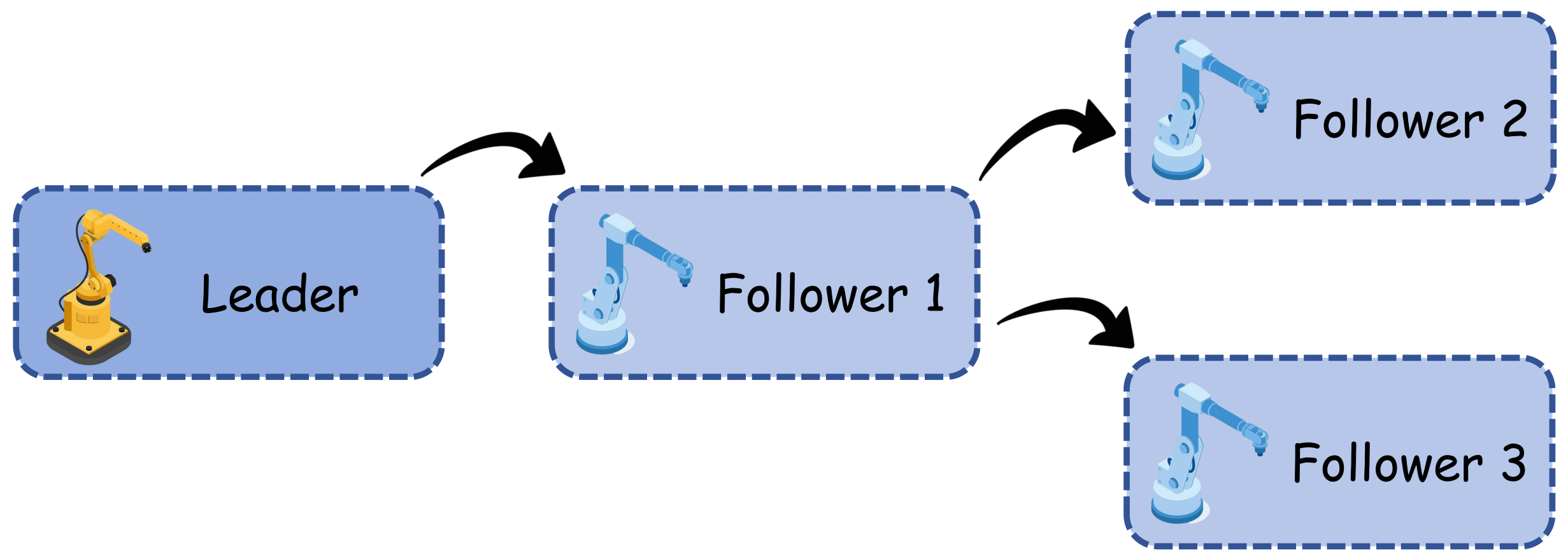}
            \caption{Communication topology}
            \label{topo}
        \end{figure}

        \begin{figure}
            \centering
            \includegraphics[width=0.48\textwidth, height=0.25\textheight]{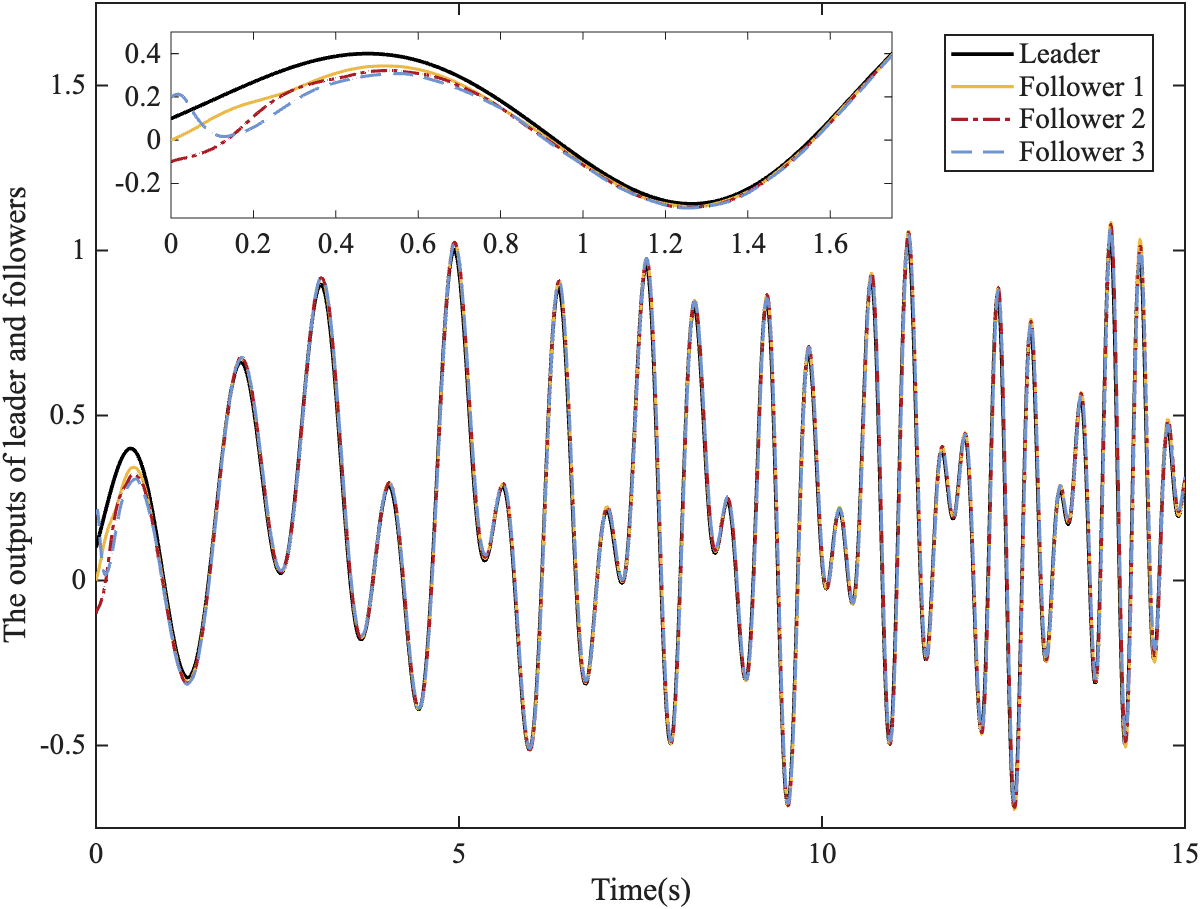}
            \caption{The tracking performance under the RL-based optimal distributed ETC algorithm.}
            \label{tracking}
        \end{figure}


        \begin{figure}
            \centering
            \includegraphics[width=0.48\textwidth, height=0.27\textheight]{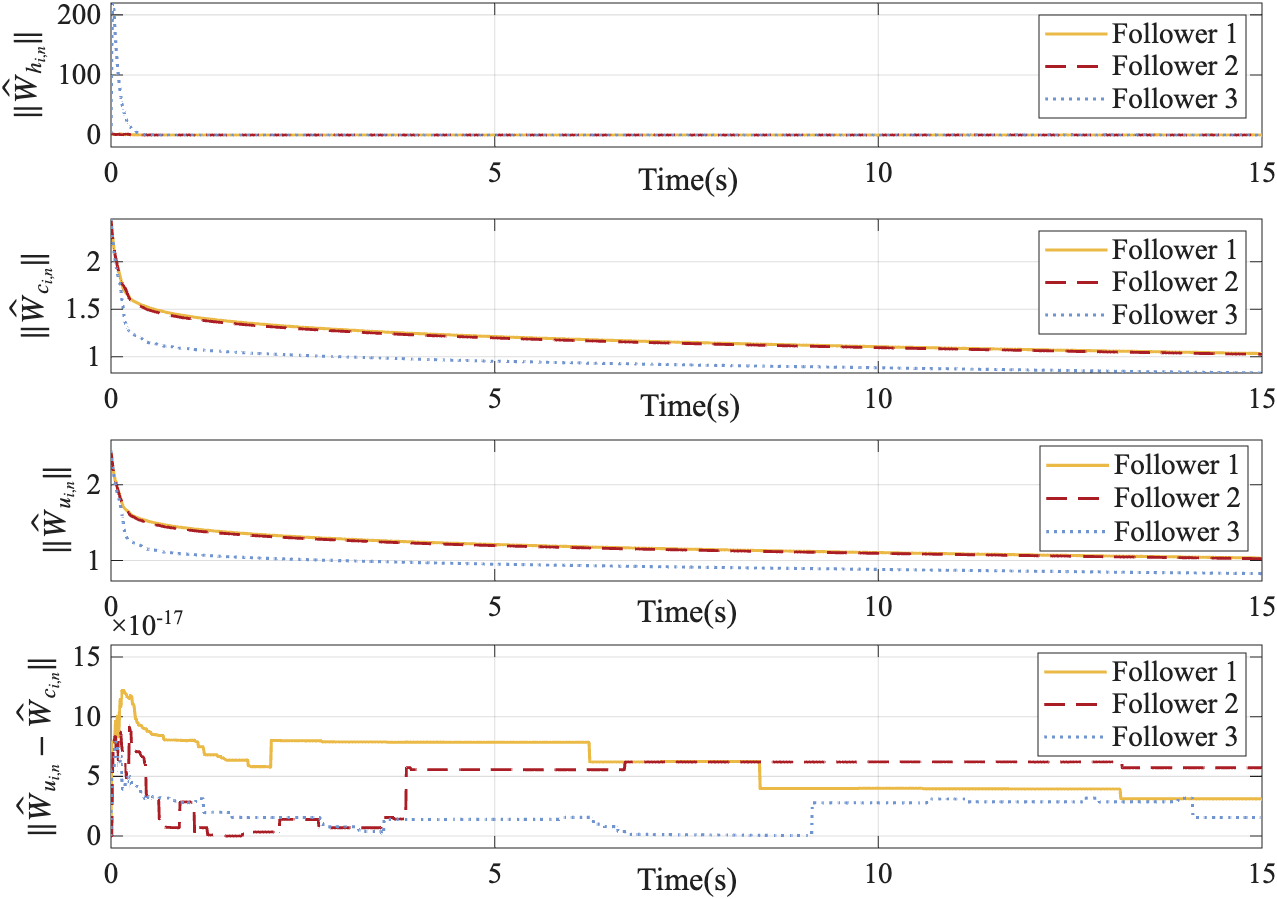}
            \caption{The weight norms of the actor, critic, identifier NNs and the actor-critic weight gap.}
            \label{NNs}
        \end{figure}

        \begin{figure}
            \centering
            \includegraphics[width=0.48\textwidth, height=0.25\textheight]{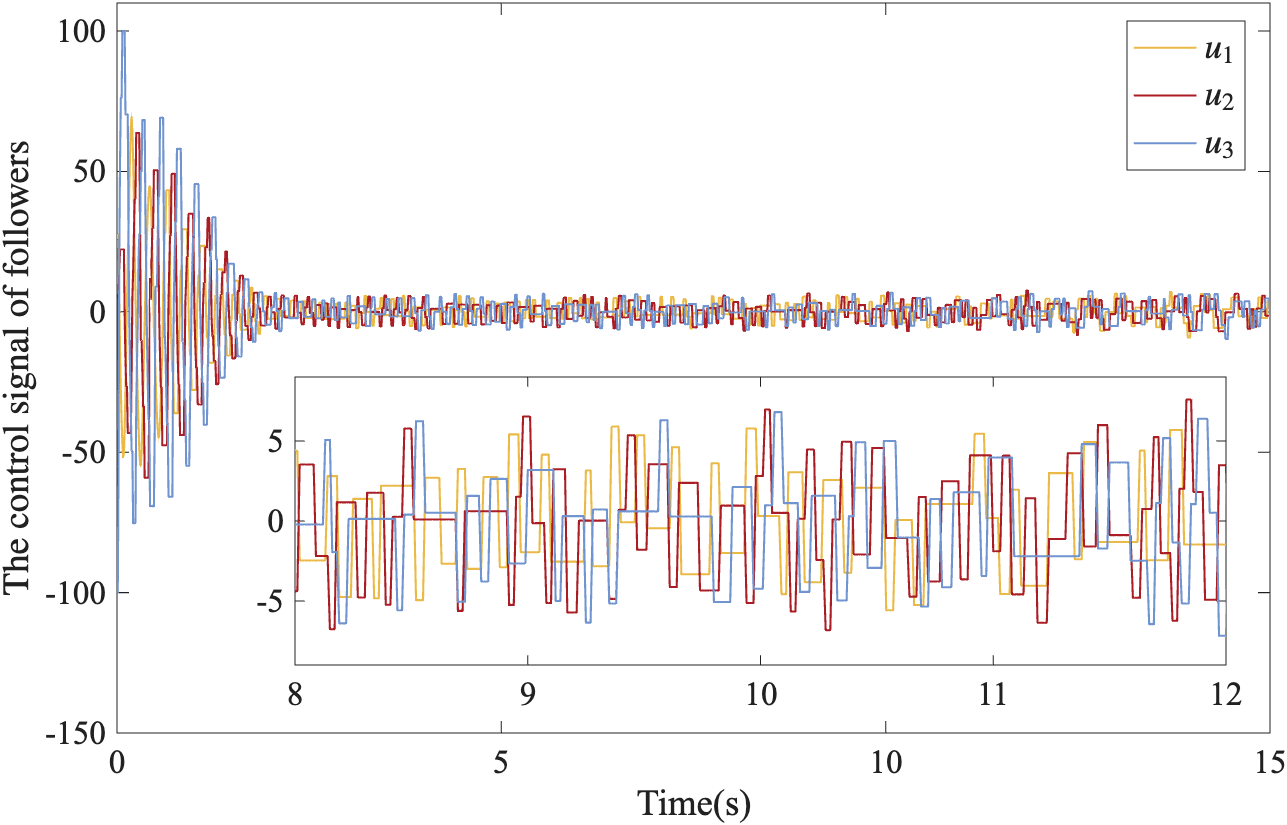}
            \caption{The control signal of followers under the hybrid ETC strategy.}
            \label{controller}
        \end{figure}

        \begin{figure}
            \centering
            \includegraphics[width=0.48\textwidth, height=0.25\textheight]{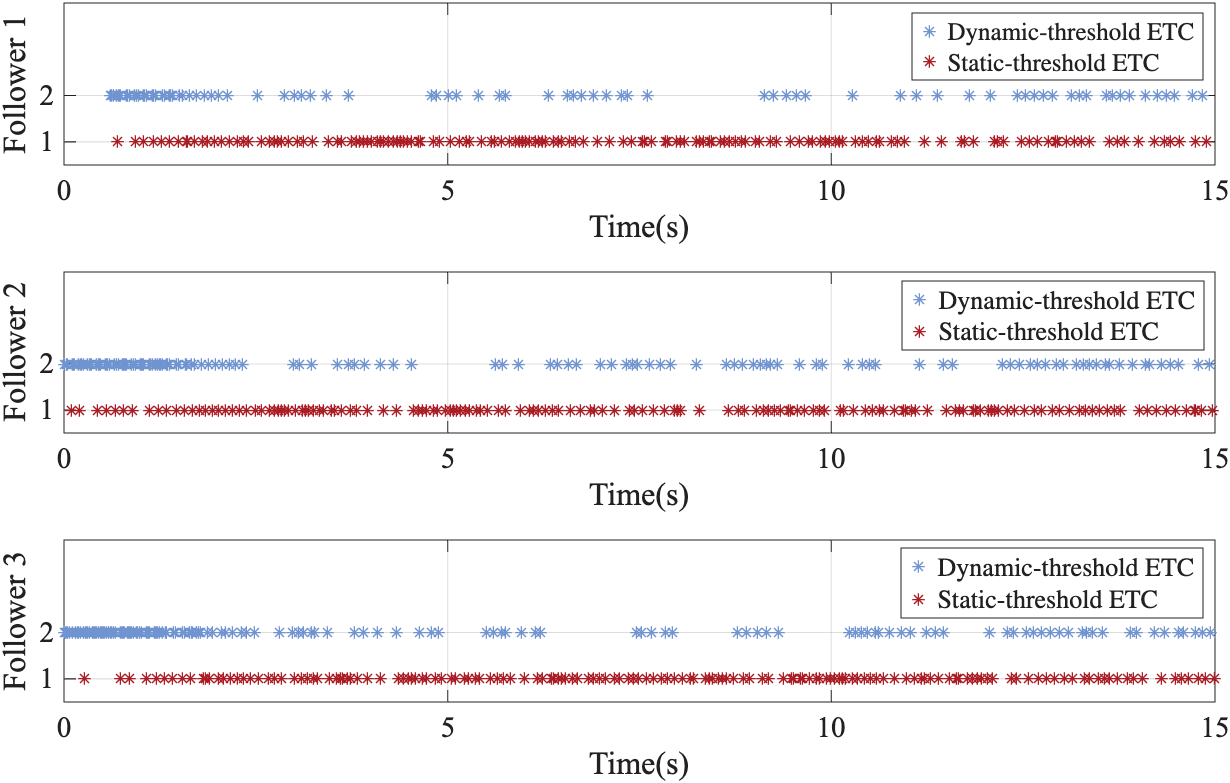}
            \caption{The intervals of the hybrid ETC strategy for followers.}
            \label{etc}
        \end{figure}

        \textbf{Results Analysis.} We analyze the simulation results in conjunction with the Figs.~\ref{tracking} -~\ref{etc} and Table \ref{table}. Fig.~\ref{tracking} illustrates the tracking performance of the leader's trajectory path $y_r$ ($\theta_r$ in simulation) in relation to three followers under the single-axis robotic manipulator MASs optimal distributed control algorithm. With the controller in effect, the tracking task is accomplished within 2 seconds. 
        Fig.~\ref{NNs} presents the evolution of actor, critic, and identifier weight norms. The identifier weights $\hat{W}_{h_{i,n}}$ rapidly adjust and then remain bounded, indicating stable approximation of unknown dynamics and uncertainties. The critic weights $\hat{W}_{c_{i,n}}$ and actor weights $\hat{W}_{u_{i,n}}$ decrease smoothly and stay bounded, confirming stable performance evaluation and policy learning. Moreover, the actor-critic gap remains nearly zero, showing strong consistency between the learned policy and critic evaluation. These results support boundedness of NNs and verify effectiveness of the actor-critic-identifier structure.
        The control inputs of three followers are shown in Fig.~\ref{controller}. It can be observed that, during the control process, the control signals become intermittent under the effect of ETC strategy.
        Fig.~\ref{etc} presents the triggering intervals of the hybrid ETC strategy for each follower under both the static-threshold and dynamic-threshold ETC cases, where the absence of fully overlapping triggering instants provides simulation-based evidence for the avoidance of Zeno behavior.

        In Table \ref{table}, the rate is calculated as
    \begin{equation}
		\nonumber
		\begin{aligned}
			\text{Rate}=\frac{\text{Continuous counts-Hybrid counts}}{\text{Continuous counts}}\times 100\%,
		\end{aligned}
	\end{equation}
    which represents the proportion of control resources saved by the hybrid ETC strategy.

    \textbf{Quantitative Analysis of ETC Performance.} The above ETC based analysis is consistent with that in \cite{simu2,simu3}. However, if only the trigger count is considered, the most ‘cost-effective’ strategy would be to trigger almost never, which may significantly degrade tracking performance. Therefore, this paper introduces a comprehensive index $K_u$ rooted in economics~\cite{econ1,econ2}, incorporating triggering costs $K_{\text{trigger}}$, performance degradation costs $K_{\text{track}}$, and control actuation wear $K_{\text{control}}$. We have

    \begin{equation}
		\nonumber
		\begin{aligned}
			K_u=\alpha K_{\text{trigger}}+\beta K_{\text{track}}+\gamma K_{\text{control}},
		\end{aligned}
	\end{equation}
    where $0<\alpha, \beta, \gamma<1$ denote the designed weight parameters, satisfying $\alpha+\beta+\gamma=1$. Here, triggering costs $K_{\text{trigger}}$ reflect the degree of control-resource savings achieved by the proposed strategy. Performance degradation costs $K_{\text{track}}$ indicate whether the tracking task deteriorates under the influence of ETC. Control actuation wear $K_{\text{control}}$ represents the higher energy consumption, heat generation, and sustained actuator load that may arise from the output of control signals. Then, we introduce the economic concept of benchmark normalization~\cite{OR} and convert the above three metrics into ratios of ETC relative to continuous control. Accordingly, we define $K_{\text{trigger}}$, $K_{\text{track}}$, and $K_{\text{control}}$ as follows:
        \begin{equation}
    		\nonumber
    		\begin{aligned}
    			K_{\text{trigger}}=\frac{\text{Continuous counts}-\text{Hybrid counts}}{\text{Continuous counts}},
    		\end{aligned}
    	\end{equation}
        \begin{equation}
    		\nonumber
    		\begin{aligned}
    			K_{\text{track}}=\frac{\int_{0}^{T}e_i^2(t)dt|_{\text{ETC}}}{\int_{0}^{T}e_i^2(t)dt|_{\text{Continuous control}}},
    		\end{aligned}
    	\end{equation}
        \begin{equation}
    		\nonumber
    		\begin{aligned}
    			K_{\text{control}}=\frac{\int_{0}^{T}u_i^2(t)dt|_{\text{ETC}}}{\int_{0}^{T}u_i^2(t)dt|_{\text{Continuous control}}}.
    		\end{aligned}
    	\end{equation}

        Beyond the triggering-count statistics, the benchmark-normalized index $K_u$ provides a more informative evaluation of the hybrid ETC strategy from the perspectives of communication efficiency, tracking accuracy, and control effort. 
        First, we perform theoretical analysis: The value of $K_{\text{trigger}}$ is expected to converge to a stable level close to $1$, which indicates that the number of triggered controller signal updates is relatively small and that a higher degree of control resource savings is achieved. The value of $K_{\text{track}}$ should remain around $1$ after the transient stage, suggesting that the discrete implementation introduced by the ETC strategy has only a marginal influence on the tracking task compared with the continuous control benchmark. And the value of $K_{\text{control}}$ should also stay close to or below $1$, indicating that the ETC strategy cannot introduce additional actuator burden or induce excessive control oscillations.
        Therefore, a stable $K_u$ trajectory within a narrow neighborhood of $1$ indicates that the ETC strategy can substantially reduce controller update frequency while preserving tracking accuracy and control effort at levels comparable to the continuous-control benchmark.

        In this paper, the weight parameters for the proposed analytical framework are set as $\alpha = 0.3$, $\beta = 0.6$, and $\gamma = 0.1$. This setting emphasizes performance preservation while still accounting for communication efficiency and energy usage.

        We present the time-varying $K_u$ values of all followers and their average response during the control process in Fig.~\ref{ku}. All followers exhibit a rapid transient adjustment and then converge to a narrow neighborhood of the benchmark level of value $1$. After about 5 seconds, the curves remain within the shaded $0.9$ to $1.1$ band, indicating that the hybrid ETC strategy maintains a stable trade off among triggering reduction, tracking accuracy, and control effort. The average curve remains near $1.05$, confirming overall balanced performance without long term degradation.

        \begin{figure}
            \centering
            \includegraphics[width=0.48\textwidth, height=0.25\textheight]{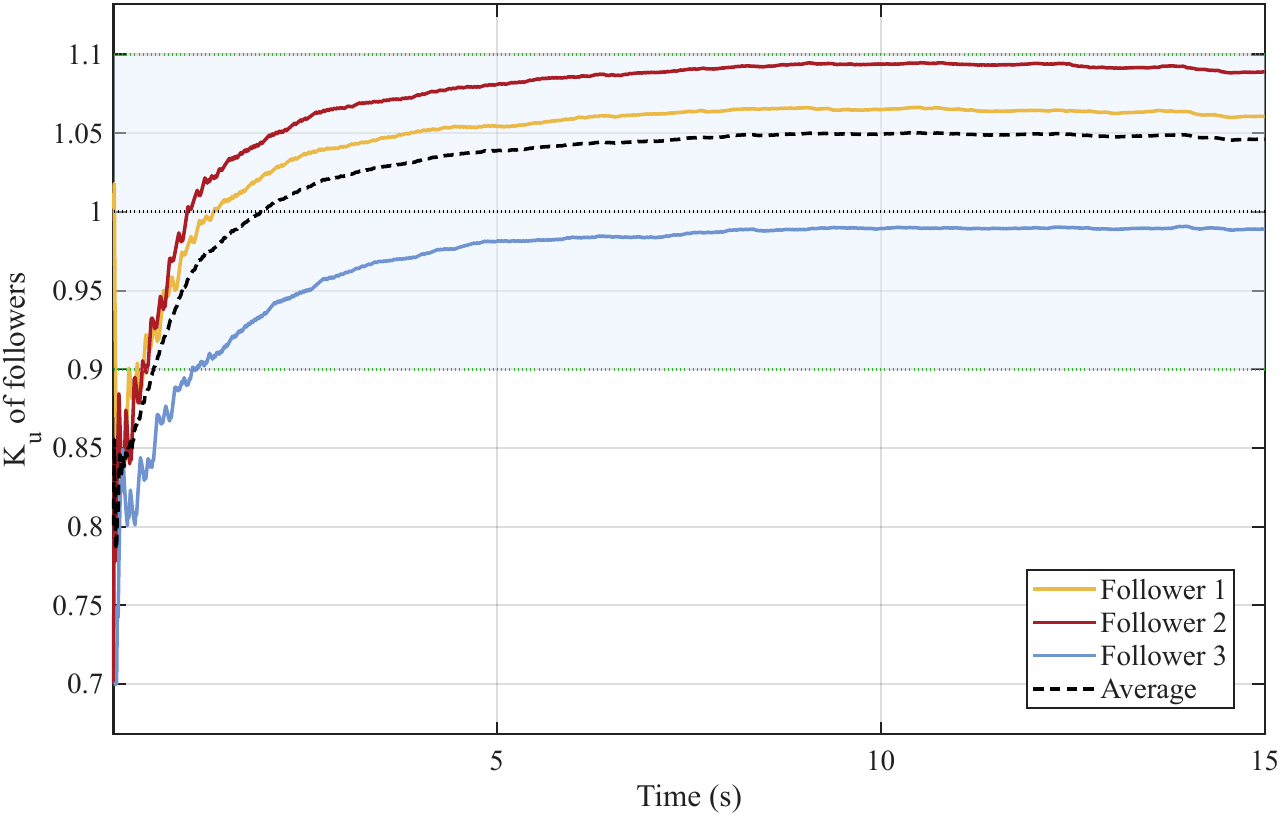}
            \caption{The value of $K_u$ for each follower in the control process.}
            \label{ku}
        \end{figure}

        \textbf{Comparisons.} In Fig.~\ref{comp}, we compare the tracking performance of the proposed optimal controller with a non-optimal baseline (in~\cite{compare}) under the same initial conditions, stochastic disturbances, fault-induced uncertainties, and hybrid event-triggered mechanism. Both controllers can drive the followers to track the leader trajectory, which verifies the basic effectiveness of the distributed control framework. However, the proposed RL-based optimal controller exhibits a visibly faster transient response and smaller tracking deviations, especially during the initial stage. The tracking error comparison in Fig.~\ref{errorcomp} further reveals that the proposed optimal controller achieves improved transient regulation for all followers. In particular, the proposed optimal control curves exhibit smaller initial deviations, faster attenuation of tracking errors, and a more compact bounded-error region throughout the simulation. This improvement benefits from the actor-critic structure, where the critic evaluates the long-term cost and the actor updates the control policy accordingly, leading to more performance-oriented control actions under stochastic disturbances and fault-induced uncertainties.

        \begin{table*}
		\caption{Triggering counts across the hybrid ETC strategy.}
		\label{table}
		\centering
		\footnotesize
		\begin{tabular}{cccccccc}
			\toprule
			\textbf{No.} & \textbf{Static} & \textbf{Dynamic} & \textbf{Hybrid} & \textbf{Continuous} & \textbf{Rate} & Rate in~\cite{intro8} & Rate in~\cite{simu2}\\
			\midrule
			Follower~1 & 88 & 114 & 202 & 1500 & \textbf{86.53\%$\uparrow$} & 85.93\% & 82.71\%\\			
            Follower~2 & 134 & 148 & 282 & 1500 & \textbf{81.20\%$\uparrow$} & 51.37\% & 79.27\%\\
			Follower~3 & 150 & 153 & 303 & 1500 & \textbf{79.80\%$\uparrow$} & 42.35\% & 71.36\%\\
			\bottomrule
		\end{tabular}
	\end{table*}

        In Table \ref{table}, compare with~\cite{intro8,simu2}, the proposed hybrid ETC strategy achieves a higher reduction rate in controller activation frequency. In addition, compare our proposed benchmark-normalized quantitative analysis method about ETC strategy with the existing methods in \cite{intro8,HETC2,simu2,simu3}, this analytical framework overcomes the limitation of conventional ETC strategy that focus only on the reduction of triggering frequency. It quantitatively integrates communication efficiency, tracking accuracy, and control effort into a unified assessment criterion. Moreover, by adjusting the weighting parameters ($\alpha, \beta, \gamma$ in this paper), the relative importance of different control objectives can be flexibly regulated according to specific control requirements. Furthermore, the proposed ETC analysis method offers a quantitative comparison perspective for future studies on ETC strategy.

        \begin{figure}
            \centering
            \subfloat[Tracking under RL-based optimal control algorithm.]{\includegraphics[width=0.975\linewidth]{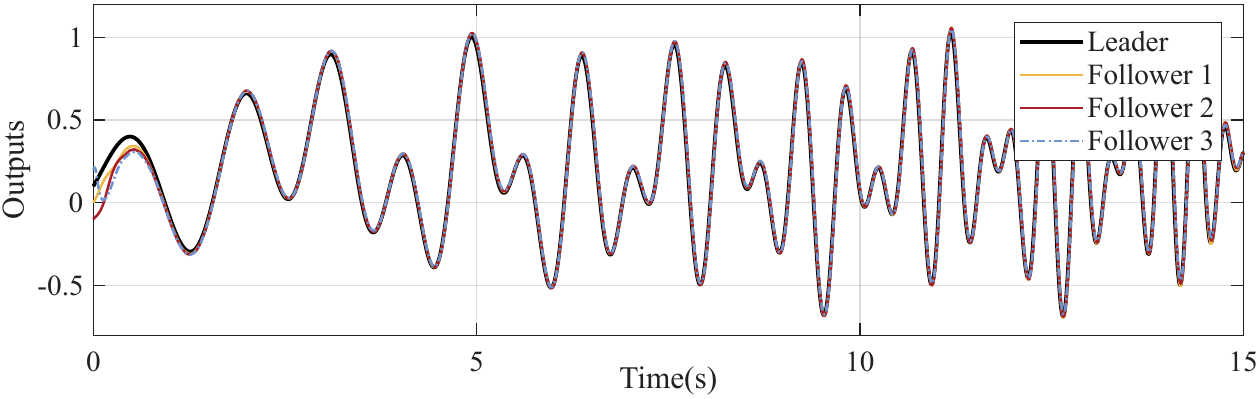}}\\
            \subfloat[Tracking under non-optimal control algorithm.]{\includegraphics[width=0.975\linewidth]{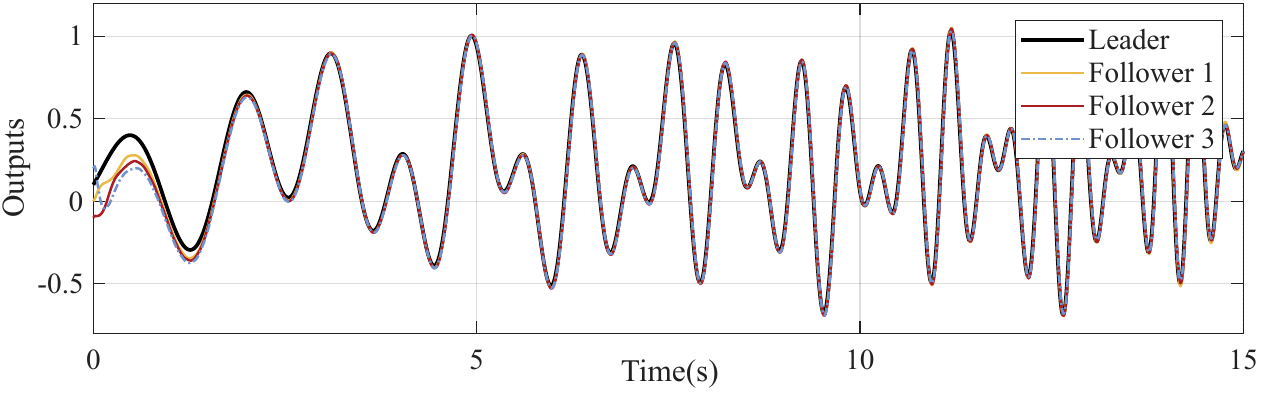}}
            \caption{Comparison of tracking performance between the proposed RL-based optimal control algorithm and non-optimal control algorithm in~\cite{compare}.}
            \label{comp}
        \end{figure}

        \begin{figure}
            \centering
            \includegraphics[width=0.48\textwidth, height=0.3\textheight]{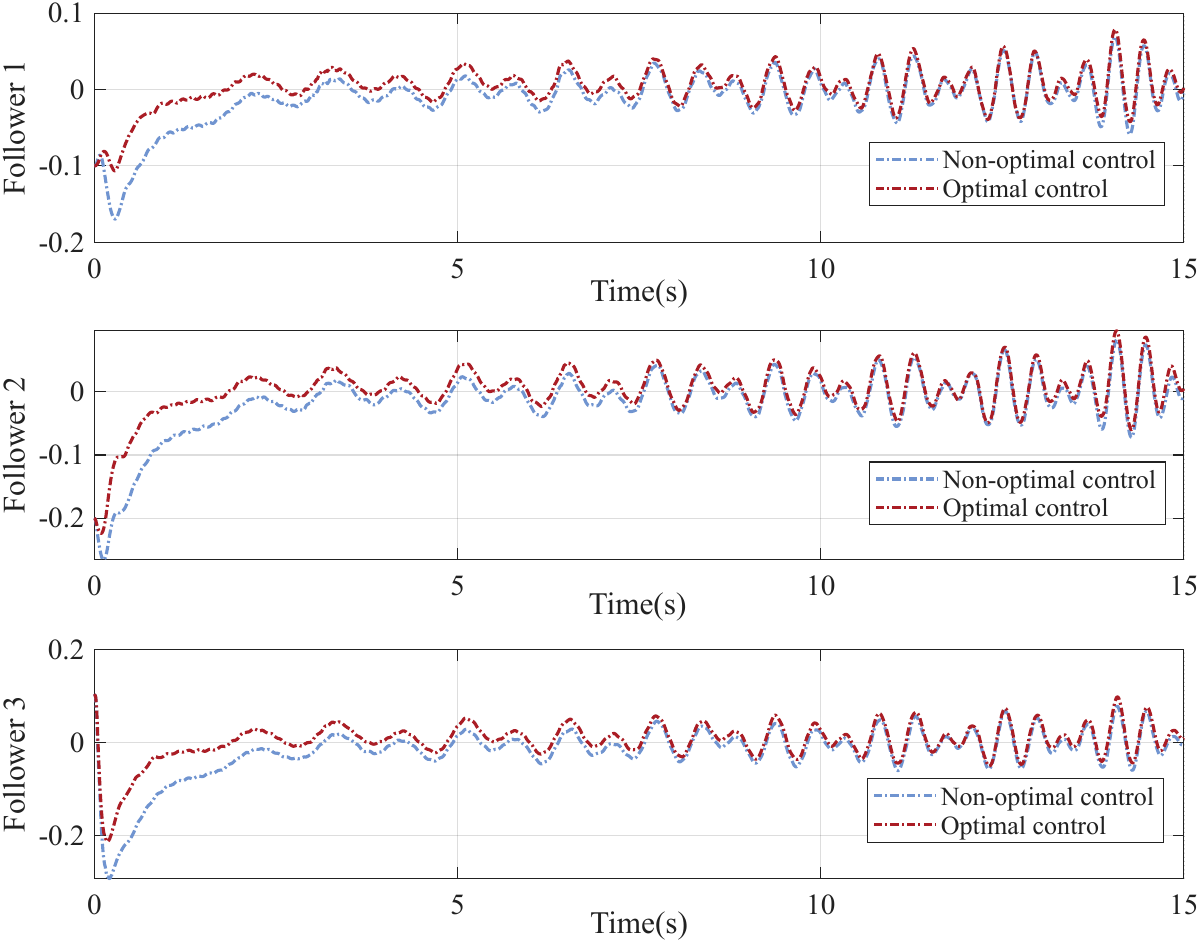}
            \caption{Comparison of tracking errors for followers between the proposed RL-based optimal control algorithm and non-optimal control algorithm in~\cite{compare}.}
            \label{errorcomp}
        \end{figure}
       
\section{Conclusion and Future Directions}\label{sec5}
This paper proposes an RL-based optimal distributed control algorithm for MASs with stochastic uncertainties. By incorporating an actor-critic-identifier structure, the backstepping method, and a hybrid ETC strategy, optimality is rigorously incorporated into each step of the backstepping design, ensuring that both virtual and actual control inputs are derived as optimal solutions to their respective subsystems. The stochastic uncertainties and non-affine nonlinear faults are handled by NNs and low-pass
filter, respectively. Simulation results further demonstrate that the proposed method can substantially reduce the control update frequency while maintaining satisfactory tracking performance. Additionally, we incorporate the concept of benchmark normalization from economics to perform a novel analysis of the ETC application results. Therefore, the proposed method provides an effective and practically feasible solution for leader-following consensus control of MASs.

Future research will focus on further optimizing our distributed control algorithm. We aim to attempt to adopt a self-triggered ETC strategy to replace the existing strategy that uses the current control state, employing predictive reasoning to determine discrete time intervals and thus predict the next triggering instant of the control signal. We will also verify whether our previously proposed novel ETC result analysis strategy applies to this new framework.

\end{document}